\newcommand{\cc}{{\tt g}}
\newcommand{\mA}{{\bf A}}
\newcommand{\mT}{{\bf T}}
\newcommand{\nvec}{{\bf n}}
\newcommand{\mvec}{{\bf m}}
\theoremstyle{plain}
\newtheorem{theorem}{Theorem}%[chapter]
\newtheorem{lemma}[theorem]{Lemma}
\theoremstyle{definition}
\newtheorem{estimate}[theorem]{Estimate}
\begin{document}
%\mainmatter

%------
% Insert the title of your paper and (if necessary)
% a short title for the running head.
%------
\title[Non-equilibrium stationary state for a one-dimensional stochastic wave equation]{Perturbation theory for a non-equilibrium stationary state of a one-dimensional stochastic wave equation}
%\titlemark{Non-equilibrium stationary state for a one-dimensional stochastic wave equation}

%------
% Insert full names of the authors.
% Add further authors as follows:
%  \emsauthor{2}{}{}
%  \emsauthor{3}{}{}
% etc.
% Abbreviate first names for the running head.
%------
%\emsauthor{1}{Gianluca Guadagni}{G.~Guadagni}
%\emsauthor{2}{Lawrence E. Thomas}{L.E.~Thomas}

\author[G.~Guadagni]{Gianluca Guadagni}
\address{Applied Mathematics, University of Virginia, Charlottesville,
  Va, 22904} 
\email{gg5d@virginia.edu}

\author[L.E.~Thomas]{Lawrence E. Thomas}
\address{Mathematics, University of Virginia, Charlottesville, Va, 22904}
\email{let@virginia.edu}

%------
% Use \authormark if the list of authors is too
% long for the running head: \authormark{A.~Doe et al.}
%------

%------
% Add one \emsaffil and one \email for each author.
% NOTE: The address does NOT appear in the paper.
% It will probably be printed in an appendix.
%------
%\emsaffil{1}{Applied Mathematics, University of Virginia, Charlottesville, Va, 22904 \email{gg5d@virginia.edu}}
%\emsaffil{2}{Mathematics, University of Virginia, Charlottesville, Va, 22904 \email{let@virginia.edu}}

%\emsaffil{2}

%------
% Add MSC 2020 codes according to www.ams.org/msc/msc2020.html.
% Secondary codes (in square brackets) are optional.
%------
%\classification[70S20, 81T15]{82C31}

%------
% Add a list of keywords.
%------
%\keywords{Stochastic wave equation, non-equilibrium stationary state}

%------
% Insert your abstract.
% ------

\maketitle 

\begin{abstract}
 We address the problem of constructing a non-equilibrium stationary state for a one-dimensional  stochastic Klein-Gordon wave equation 
with non-linearity, using perturbation theory.  The linear theory is reviewed, but with the linear equations of motion including an additional potential term which emerges  in the renormalization of the perturbation expansion for the state corresponding to the non-linear equations of motion.  The potential is the solution to a fixed point equation.  Low order terms in the expansion for the two-point function are determined.  
\end{abstract}

%\makecontribtitle

%------
% INSERT THE BODY OF THE PAPER HERE (except
% acknowledgments, funding info and bibliography)
%------

\begin{section}{Introduction}

The goal of this article is to introduce a perturbation theory for a putative non-equilibrium stationary state of a one-dimensional stochastic Klein-Gordon wave equation 
with non-linearity, with the field supported on a ring.  The field is weakly coupled to effective thermal baths of differing temperatures that provide  both dissipation and  noise. The stationary state for the linear dynamics is reviewed, but with the field equations of motion including an additional potential term which arises from a renormalization in the perturbation expansion.   The additional potential term, which depends on the difference of temperatures, alters the perturbation 
and changes  the unperturbed state.         
      
The equations for the field $\phi_t(x)$ and its momentum $\pi_t(x)$ are given by
\begin{align}\label{equationsofmotion.1}
\partial_t\phi_t(x)&= \pi_t(x)\nonumber\\
\partial_t\pi_t(x)&= (\partial_x^2 - 1)\phi_t(x)-  g(\phi_t(x))- r(t) \cdot \alpha(x)
\nonumber\\
dr_{i}(t)&=  -\left(r_{i}(t)-\langle\alpha_i,\pi_t\rangle\right)dt +
\sqrt{T_i}d\omega_i(t),\,\,i=1,2. 
\end{align}
We assume that the $ \phi_t$ and $\pi_t$ are periodic functions in $x\in [0,2\pi]$.  
The non-linearity $g$
will be a polynomial in the field, particularly $g(z)= z^3$. The two components of $r_t = (r_1(t),r_2(t))^{\intercal} \in{\mathds R}^2$  model the heat baths. The last equations allow  energy 
in the field to dissipate into the baths and serve as a source of energy into the field from the two standard independent Brownian motions
$\omega_i(t)$, $i= 1,2$.  $T_1$ and $T_2$ are the bath temperatures.  

The $\alpha_i(x)$, $i=1,2$ are fixed distributions coupling
the field to the two thermal baths, and $\langle \alpha_i$ is the operation of integrating $\alpha$ against another function of $x$, giving out a scalar.
The $\alpha_i$'s are assumed to satisfy the conditions on their Fourier coefficients $\{\widehat{\alpha}_i(n)\}$:  that there exist  positive constants $c_0,c_1,c_2$ with
$c_1<c_2<1 $ 
  such that $c_1 \leq|\widehat{\alpha}(n)\cdot \widehat{\alpha}(n)|\leq c_2 (\widehat{\alpha}^*(n)\cdot \widehat{\alpha}(n))\leq c_0$.  The bounds  ensure that eigenvalues of operators associated with the linear problem are non-degenerate and that certain small denominators (resonances) are manageable.  
    $\delta$-functions at antipodal points on the circle are ruled out by this condition, but modifications of them are of course possible.  
 
The Klein-Gordon field can be regarded as a limit of harmonic or anharmonic chain models in the limit with the number of oscillators growing to infinity and
with rescaling.   
There is a voluminous literature on  harmonic and anharmonic chains and their thermodynamics.
 One of the earliest contributions was that of Rieder, Lebowitz, and Lieb \cite{RLL}, who considered a linear chain
of oscillators with heat baths at the ends.  They discovered the surprising but now familiar phenomena that the energy current is proportional to the difference of temperatures but independent of the length of the chain, and  that the energy density along the chain has a peculiar
profile, particularly near its ends.   A model consisting of a finite chain of anharmonic oscillators was considered by
Eckmann, Pillet, and Rey-Bellet \cite {EPR},  who showed existence of a thermodynamic stationary state with entropy production.
An article by Aoki, Lukkarinen and Spohn \cite{ALS} concerned  determination of steady state energy flow through an anharmonic chain of
oscillators, with the system modeled via a kinetic theory of colliding phonons. They arrive at a Fourier law for the current and find
 good agreement  with simulations. Their article contains an excellent review of the literature on energy transport in simple models.

For continuum models, as considered here,
McKean and Vaninsky established a Gibbs state equilibrium measure for nonlinear wave equations and showed its invariance
under a Hamiltonian flow \cite{MV}.
Stationary states for stochastic non-linear wave equations have been considered by many others,
for example, Barbu and Da Prato \cite{BDaP}. Their dynamics, however, include
a dissipative $-\pi(x)$ term in the $\pi$ equation and a driving cylindrical Weiner process.They show existence and uniqueness
of an invariant measure assuming boundedness  on the derivative of the non-linear interaction.  Gubinelli, Koch, and Oh established local existence in time for a non-linear stochastic
wave equation driven by white noise in two dimensions involving a time-dependent (rather than spatially dependent, as here) renormalization\cite{GKO}. 
Rey-Bellet and Thomas  showed global existence in time for the above equations of motion (\ref{equationsofmotion.1}) in spaces of low regularity, $\phi\in H^s$, $s>1/3$
, for $g=-G'$, $G$
a polynomial bounded below, but for smoother $\alpha$'s than employed here \cite{RBT}. Wang and Thomas showed ergodicity for
a model with bounded interaction and ultraviolet cut-off \cite{WT}.

Closer to our problem, but
again in a discrete setting, is the work of Bricmont and Kupiainen,\cite{BK},
who consider a three-dimensional system of anharmonic oscillators between two parallel planes where noise and dissipation occur, effectively at different
temperatures. Stationarity of a putative measure leads to a system of relations for correlations of the field (BBGKY hierarchy); the system is truncated,
leading to an implicit relation for correlation functions.  They then find a resulting Fourier law for heat flow. There is presumably a connection between this hierarchy approach and the perturbative approach presented here, which 
involves an implicit relation for a correlation function.  The connection between the approaches remains to be explored.

Section 2 is a review of the linear problem $g=0$ \cite{T1}, but with a potential term $v= v(x)$ included in the $\pi$ equation,  
$v$ arising  in the non-linear perturbation problem.  Analytic  properties of the field covariance, and especially 
its dependence on $v$, are determined.
The spectral theory of the Liouville operator associated with the linear equations of motion is described. The spectral properties of this operator are largely  determined
by the matrix operator of its drift term, $\mA_v$.
 The analysis  of $\mA_v$, which is not normal, closely follows  the self-adjoint case, but with caveats 
 that  the eigenprojections of $\mA_v$ need to be estimated,  and that there is near degeneracy of pairs of
eigenvalues causing small-denominator (resonance) issues.  Estimates on the eigenvectors and eigenvalues appear in the text
as needed, but their proofs are relegated to  the appendix. 

 Section 3 is an introduction to a formal
perturbation expansion with $\phi^3$ non-linearity in the $\pi$-equation of (\ref{equationsofmotion.1}), the expansion having 
a certain 1950-60's quantum field theory flavor to it.  The potential $v$
makes its appearance by necessity here in a renormalization issue and is shown to be the solution to a fixed-point problem. Only first and second order terms in the expansion are considered.

Much of the analysis of $\mA_v$ is equivalent to  that for  a non-Hermitian one dimensional Schr\"{o}dinger-like operator with complex finite rank interactions;  part of the 
 appendix treats this  operator.

\end{section}

\begin{section}{\bf Aspects of the linear problem, $g(\phi)=0$}

\begin{subsection}{\bf Covariance of the field}
  Define the matrix operator ${\mA}_v$, 
\begin{equation}\label{bA.eq}
{ \mA}_v= \left(\begin{array}{cccc}
     0&1&0&0\\
    \partial_{x}^{2}-1-v(x)&0&-\alpha_1(x)&-\alpha_2(x)\\
     0&\langle\alpha_1&-1 &0\\
     0&\langle\alpha_2 &0&-1
     \end{array}
\right),
\end{equation}
which is simply the linear operator part of the right side of the equations of motion but with an additional continuous potential $v=v(x)$ which will
arise in the renormalization issue with $g\neq 0$.  The matrix $\mA_v$ isn't normal, rather it being the sum of  a matrix similar to a skew-adjoint operator and a rank $2$ symmetric perturbation. It operates on complex valued $4$-vectors of the form, $e= (e_{\phi}(x),e_\pi(x),e_r)^{\intercal}$; $e_\phi$ and $e_\pi$ are suitably
integrable functions in $x$, $e_r\equiv (e_{r_1},e_{r_2})^{\intercal}$ has two scalar components.  Let
        \begin{align} 
        \langle f,e\rangle=  \langle f,e\rangle_{\mathcal{H}} = \int \left(f^*_{\phi}(x)e_\phi(x) +f^*_{\pi}(x)e_\pi(x)\right) dx + f^*_r\cdot e^{\phantom{*}}_r,
\end{align}
 which will  serve both as a pairing of test functions with distributions and as an inner product for a  Hilbert space ${\mathcal{H}}$.   The matrix ${\bf \sqrt{T}}$ is defined  
 \begin{equation}\label{tA.eq}
\sqrt{\bf T} = \left(\begin{array}{cccc}
     0&0&0&0\\
   0&0&0&0\\
     0&0&{\sqrt{T_1}} &0\\
     0&0 &0&{\sqrt{T_2}}
     \end{array}
\right)
\end{equation}
with $T_1, T_2$ the bath temperatures. 

 Let $f_{\nvec}, \lambda^*_{\nvec}$ and $e_{\nvec},\lambda_{\nvec}$ be  eigenvectors and their eigenvalues  of ${\mA}_v^{\dagger}$ and ${\mA}_v$ respectively,
 $ \mA_v^{\dagger}f_{\nvec}= \lambda^*_{\nvec}f_{\nvec}$,  $\mA_ve_{\nvec}= \lambda_{\nvec}e_{\nvec}$. The mode number $\nvec$ is in the set $\{(n,i): \,\, n\in {\mathds Z}, \,i\in \{1,2\}\}$ or $\{1,2\}$; the latter labels two exceptional cases of eigenvectors with eigenvalues near $-1$ for small $\alpha$'s, their weight principally on
 their $r$ components.
          Define the projections $\{P_{\nvec}\}$, also $4\times 4$ matrices, by
\begin{align}\label{projectiondef}
           P_{\nvec}= \frac{|e_{\nvec}\rangle\langle f_{\nvec}|}{\langle f_{\nvec},e_{\nvec}\rangle_{\mathcal{H}}}.                       
     \end{align}
   Bounds on these operators, their eigenfunction components  and eigenvalues uniform in $\nvec$ and $v$ are obtained in the appendix,
    Lemma \ref{enalphadeltax}, the paragraph above it, and Lemma \ref{fixedpointestimates}, but will be
   reviewed as needed below.  The bounds depend on $v$ but this dependence won't appear explicitly until variations in $v$ are considered.

 For notational convenience, we set 
        \begin{align}
          \Phi(f) =  \langle f, \Phi\rangle
                           \end{align}
 with $\Phi_t(x) \equiv \left(\phi_t(x), \pi_t(x),r_t\right)^{\intercal}$  regarded as
a column vector.  The equations 
of motion can then be written as
 \begin{align}\label{lineq1}
         d_t\Phi_t(x) = {\mA}_v\Phi_t(x) dt +\sqrt {\bf T}d\omega(t)
         \end{align}
a stochastic differential equation for an infinite-dimensional Ornstein-Uhlenbeck process with solution $\Phi_t$ a stochastic integral (mild solution, \cite{DaPZ}),
    \begin{align}\label{stochasticevol}
                   \Phi_t= \int_0^t \exp((t-s){\mA}_v)\sqrt{\mT}d\omega(s)+\exp {(t\mA_v)}\Phi_0.
                        \end{align}
For $f_{\nvec}$ an eigenvector of $\mA_v^{\dagger}$ with eigenvalue $\lambda_{\nvec}^*$, we have simply that
       \begin{align}\label{stochasticevolfn}
                   \Phi_t(f_{\nvec})= \int_0^t \exp((t-s){\lambda_{\nvec}})\langle f_{\nvec},\sqrt{\mT}d\omega(s)\rangle +\exp {(t\lambda_{\nvec})}\Phi_0(f_{\nvec}).
                        \end{align}

           The process defined by Eq.(\ref{lineq1}) accommodates a stationary Gaussian measure $\mu_v$ depending on $v$. Its 
           {\em mode covariance matrix} $\widehat{C}_v(\mvec, \nvec)$, for each $\mvec, \nvec$ a $4\times4$ matrix,  is given by        
\begin{align}\label{Chat}
     \widehat{C}_v(\mvec, \nvec)\equiv E_{\mu_v}[{P^*_{\mvec}\Phi}P_{\nvec}\Phi]= -\frac{P^{\phantom{*}}_{\nvec} {\bf T}P^{\dagger}_{\mvec}}{\left({\lambda}^*_\mvec +\lambda^{\phantom{*}}_\nvec\right)}.
\end{align}
$\widehat{C}_v$ depends on $v$, through its dependence on the eigenfunctions and eigenvalues, but again
we do not write their dependence until needed in the non-linear perturbation problem.      To see (\ref{Chat}), we have from the equation of motion (\ref{lineq1}) that 
             \begin{equation}
                              d\Phi_t(f_{\nvec}) = \lambda_{\nvec}\Phi_t(f_{\nvec})dt +\langle f_{\nvec},\sqrt{\bf T}d\omega(t)\rangle.
                              \end{equation}
Stationarity of the measure and application of It\^{o}'s lemma gives
\begin{align}
       & d_tE_{\mu_v}[\Phi_t(f^*_{\mvec})\Phi_t(f_{\nvec})]_{\big| t=0} = 0 \nonumber\\
                 &= \left({\lambda}^*_\mvec +\lambda_\nvec\right)E_\mu[{\Phi}(f^*_{\mvec})\Phi(f_{\nvec})] + \langle f_{\nvec}, {\bf T} f_{\mvec}\rangle,
                    \end{align}   
    so that 
    \begin{align}
           E_{\mu_v}[{\Phi}(f^*_{\mvec})\Phi(f_{\nvec})] = -\frac{\langle f_{\nvec}, {\bf T} f_{\mvec}\rangle}{\lambda^*_\mvec +\lambda_\nvec}.
    \end{align}  
    (This relation can also be obtained by computing the covariance of the integral for $\Phi_t(f_{\nvec})$, Eq.(\ref{stochasticevolfn}), in the limit $t\rightarrow\infty$. It
    is evident in this latter approach that $C_v$ is positive definite.)
    Inserting the definition of the projections  (\ref{projectiondef}) completes the derivation of the covariance matrix (\ref{Chat}).   
The denominators in (\ref{Chat}) appears dangerous when $\nvec= \mvec$, $\lambda^*_{\nvec}+\lambda_{\nvec}\sim -|\widehat{\alpha}(n)|^2 /n^2$, or when $\lambda_\mvec$ and $\lambda_\nvec$ are nearly degenerate, ${\lambda}^*_{\mvec}+\lambda_{\nvec}\sim \pm i|\widehat{\alpha}(n)|^2/n$.  
  But in the numerator, only the $r$-components of $f_{\mvec}$ and $f_{\nvec}$ are detected by ${\mT}$ and these
components are $\sim |\widehat{\alpha}(m)| /m$ and $|\widehat{\alpha}(n)|/n$ respectively. See the discussion below
Eq.(\ref{covdefnition}) concerning the field covariance, as well as  Lemma(\ref{conjP}) and Eqs.(\ref{intermsofpi}) in the appendix 
 for a more refined estimate of the eigenvalues.

The field $\phi_t(x)$,  i.e., the first component of $\Phi_t$, can be expanded using the $P_{\nvec}$'s in a resolution of the identity,
    \begin{align}
    \phi_t(x) = \sum_{\nvec}P_{\nvec,\phi(x)}\Phi_t=  \sum_{\nvec}\frac{e_{\nvec,\phi}(x)\Phi_t(f_{\nvec})}{\langle f_\nvec, e_{\nvec}\rangle_{\mathcal{H}}}
                           \end{align}  
($P_{\nvec}\Phi_t$ is a $4$-vector, and, $P_{\nvec,\phi(x)}\Phi_t$ is its $\phi$-component at $x$ and time $t$, $\Phi_t(f_{\nvec})$ given above in (\ref{stochasticevolfn}).)   
With this expansion, we can express the {\em space-time covariance} just for the field $\phi$ using the mode covariance,
\begin{align}\label{covdefnition}
             &C_v(x,y,t)\equiv E_{\mu_v}\left[\phi(x)\phi_t(y)\right]\nonumber\\
             & = \mathop{\sum\sum}_{\mvec,\nvec}\frac{ e^*_{\mvec,\phi}(x) e^{t \lambda_{\nvec}}e^{\phantom{*}}_{\nvec,\phi}(y)}{\langle f_{\mvec},e_{\mvec}\rangle^*_{\mathcal{H}}\langle f_{\nvec},e_{\nvec}\rangle_{\mathcal{H}}}E_{\mu_v}\left[{\Phi}(f^*_{\mvec})\Phi(f^{\phantom{*}}_\nvec)  \right]\nonumber\\
             &=-\mathop{\sum\sum}_{\mvec, \nvec} 
\frac{P^{\phantom{\dagger}}_{\nvec,\phi(y)}TP^{\dagger}_{\mvec,\phi(x)}
}{\lambda^*_{\mvec}+\lambda^{\phantom{*}}_{\nvec}} e^{t\lambda_{\nvec}}.             \end{align}
$C_v(x,y,t)$ is bounded and H\"{o}lder continuous in its variables, with any index $\gamma<1$.   Moreover this continuity implies
by the Kolmogorov continuity  theorem \cite{SV} that $\phi_t(x)$ is a.s. H\"{o}lder continuous with any index $\gamma<1/2$ in both $x$ and $t$  with
 respect to the Gaussian measure $\mu_v$.

Showing the convergence of the series (\ref{covdefnition}) for the covariance and its continuity properties in $x$ and $t$ requires detailed estimates on the eigenfunctions and their eigenvalues which
are established in the appendix (paragraph preceeding
Lemma\ref{enalphadeltax}, Lemmas \ref{enalphadeltax}, \ref{conjP}).  The salient facts are these: (i) There exist constants $\{a_{\nvec}\}$ uniformly bounded in $\nvec$ such that
$P_{\nvec,\phi(x),r}\sim {a_{\nvec} e_{\nvec,\pi}(x)}/{n^2}$, 
with $\|e_{\nvec,\pi}\|_{L^2[0,2\pi]}=1$, ($e_{\mvec,\phi}= e_{\mvec,\pi}/\lambda_{\mvec})$. Moreover, the $e_{\nvec,\pi}$'s are pointwise bounded
and almost Lipschitz uniformly in $\nvec$ in the sense that there exists  positive constants $b, c_{\gamma}$ such that $\sup_x|e_{\nvec,\pi}(x)|\leq b \|e_{\nvec,\pi}\|_{L^2[0,2\pi]}$ and $|e_{\nvec,\pi}(x) -e_{\nvec,\pi}(y)|\leq c_{\gamma}(|n||x-y| +|x-y|^{\gamma})\|e_{\nvec,\pi}\|_{L^2[0,2\pi]}$ for any $\gamma<1$.   (ii) The eigenvalues  behave like $\lambda_{\nvec}\sim \pm in - |\widehat{\alpha}(n)|^2/n^2$ for large $n$ 
 by the non-degeneracy assumption on the $\alpha$'s.  The small denominator issue $\lambda^*_{\mvec} +\lambda^{\phantom{*}}_{\nvec}\sim
-|\widehat{\alpha}(n)|^2/n^2$, $\mvec=
\nvec$ in (\ref{covdefnition}) and the nearly degenerate case $\lambda^*_\mvec+\lambda_{\nvec}\sim \pm i/n$  are  mollified by the $1/n^4$ behavior of the numerator.  (iii) The estimates here are uniform in the potential $v$, provided
$\|v\|_{{\mathcal{C}}}\leq \varepsilon_o \|\widehat{\alpha}\|^2_{\infty}$, for some positive $\varepsilon_o>0$, ($\|v(x)\|_{\mathcal{C}}\equiv \sup_x|v(x)|$). Similar estimates
hold for the time $t$ in the summands of (\ref{covdefnition}).   As example, the diagonal terms of the double sum behave as $n^{-2}$, the near diagonal
terms as $n^{-3}$, and otherwise the terms are $\sim n^{-2}m^{-2}|n-m|^{-1}$ for $n,m$ large, and absolute covergence follows.

 \end{subsection}
 
\begin{subsection}{\bf Semigroup and Generator for $\Phi_t$}
 
  Let $F$ be a functional of the field, e.g., a polynomial in $\Phi(f_{\nvec_1}),...,\Phi(f_{\nvec_k})$, $f_{\nvec_1},..., f_{\nvec_k}$ eigenfunctions of $\mA^{\dagger}$ (these polynomials are dense in the $L^2$-Hilbert space of functionals, with $\mu_v$ as measure).  The semigroup $e^{t{\mathcal{L}}_v}$ associated  with the above process  $\Phi_t$ is defined by  
         \begin{align}\label{semigroup}
         e^{t{\mathcal{L}}_v}F(\Phi) = E_{\Phi}[F(\Phi_t)],
         \end{align}
where the expectation here is with respect to the Brownian motion with  $\Phi_t$ given above (\ref{stochasticevol}) and with $\Phi= \Phi_0$.
       The generator of the semigroup associated with $\Phi_t$ has the 
expression
\begin{align}
    {\mathcal{L}}_vF(\Phi)= \frac{1}{2}\sum_{i=1}^2 \frac{\partial}{\partial r_i} T_i\frac{\partial}{\partial r_i}F(\Phi) + \left\langle \mA_v\Phi, \frac{\delta}{\delta \Phi}F(\Phi)\right\rangle,
\end{align}
             $\frac{\delta}{\delta \Phi} = \left(\frac{\delta}{\delta \phi(x)}, \frac{\delta}{\delta \pi(x)}, \frac{\partial}{\partial r}\right)^{\intercal}$ being the functional
       gradient.  
${\mathcal{L}}_v$ is a well-defined operator acting in $L^2$ of
$\mu_v$. Polynomials in $\Phi$ of degree $m$ are mapped into
polynomials of the same degree by ${\mathcal{L}}_v$ and its
semigroup. There is  ``second quantization'' associated with ${\mathcal{L}}_v$. As simple example, if $f_{\nvec}$ is an eigenfunction of
$\mA_v$ 
with eigenvalue $\lambda^*_{\nvec}$, then $\Phi(f_{\nvec})$ is an eigenfunction of ${\mathcal{L}}_v$ with eigenvalue $\lambda_{\nvec}$; $\Phi(f_{\nvec})$ is
an order $1$ Hermite function of the field.  Given two eigenfunctions  $f_{\mvec}$ and $f_{\nvec}$ of $\mA_v$ , the functional  $F(\Phi)= \Phi(f_{\mvec})\Phi(f_{\nvec})+\langle f_{\mvec},
{\bf T}f^*_{\nvec,r}\rangle/(\lambda_m+\lambda_{\nvec})$, an order $2$ multivariable Hermite function of the field,
 is also an eigenfunction of ${\mathcal{L}}_v$ with eigenvalue $\lambda_{\mvec,\nvec}= \lambda_{\mvec}+\lambda_{\nvec}$. (Note that the denominator $(\lambda_m+\lambda_{\nvec})$ has non-zero real part so it does not vanish.) 
 In general, one can construct higher degree polynomials 
which are also eigenfunctions of ${\mathcal{L}}_v$ of the form: a monomial of degree $k$, $\Phi(f_{\nvec_1}) \Phi(f_{\nvec_2})\cdots\Phi(f_{\nvec_k})$ minus a lower order polynomial, with eigenvalue $\lambda= \lambda_{\nvec_1}+\lambda_{\nvec_2}\cdots+\lambda_{\nvec_k}$. 
All of the eigenfunctions  constructed are of finite $L^2$-norm with respect to $\mu_v$, as a consequence of the boundedness of the mode covariance
and Wick's theorem. They are complete since the $f_n$'s are.  ${\mathcal{L}}_v$ has discrete spectrum.  Clearly  $0$ is an eigenvalue with eigenvector a constant,
but it is also an accumulation point, e.g., $\lambda_{\nvec,\nvec^*} =\lambda_{\nvec} + \lambda^*_{\nvec}$ is an eigenvalue which goes to zero, $n\rightarrow\infty$. That
zero is an accumulation point is the source of small-denominator resonance difficulties.\\

We remark that the adjoint operator  ${\mathcal{L}}^{\dagger}_v$ defined
with respect to $\mu_v$  has the same second derivative part as ${\mathcal{L}}_v$ but 
the first derivative drift part is different.   The dual process is
another Ornstein-Uhlenbeck process. 

%If the temperatures are different, $T_2\neq T_1$, then ${\mathcal{L}}_v$ is  not self-adjoint, with respect to $\mu_v$.  ${\mathcal{L}}^{\dagger}_v$ has the same second derivative part but
%the first derivative drift part is different.   The dual process is  another Ornstein-Uhlenbeck process. 
 
 \end{subsection}
 \end{section}

\begin{section} {Perturbation by a $\phi^3$ non-linearity}

The goals of this section are to  outline a perturbation expansion, introduce the renormalization issue, and address it  at least to
 second order in a coupling constant $\cc$. The original equations of motion (\ref{equationsofmotion.1}) have the potential $v(x)$ equal to zero.
  But it will be important to keep $v$ in the present calculations. 
 
 \begin{subsection}{Perturbation expansion}
 
 Let  $V= \int dx \phi^3(x)\frac{\delta}{\delta \pi(x)}$ be the perturbation, $\cc$ a coupling constant, and 
let $\nu_{\cc,v}$ be the putative invariant measure  to equations of motion (\ref{equationsofmotion.1}) with  $g(\phi(x))=\cc \phi^3(x)$
and $-v(x)\phi(x)$ included. Let ${\mathcal{L}}_{\cc,v}= {\mathcal{L}}_v -\cc V$ be the associated
(Liouville) semigroup generator.   
Then the formal
 expansion for $\nu_{\cc,v}$, (generating functional)  is given by
\begin{align}\label{EXpansion}
           \int d\nu_{\cc,v}(\Phi)F(\Phi) = \int d\mu_v(\Phi) \sum_{m\geq 0}\cc^m  \left( V\frac{Q_v}{{\mathcal{L}}_v}\right)^m F(\Phi)
  \end{align} 
with  $Q_vF(\Phi)= F(\Phi)-\int d \mu_v F$, $F$ a functional of the field. The series is obtained by making an ultra-violet cutoff so that $0$ is an isolated eigenvalue of ${\mathcal{L}}_v$,  writing then the difference
$\nu_{\cc,v}-\mu_v$ as a  contour integral over a small circle about $0$  with integrand the difference of resolvents, $(z-{\mathcal{L}}_{\cc,v})^{-1} -
(z-{\mathcal{L}}_v)^{-1}$ written via the second resolvent equation  \cite{K} and expanding the difference in a formal Born series. Note that the series for $\nu_{\cc,v}$ has integral $1$, since $\mu_v$ does and the facts that $V$ and $Q_v$ (which appears gratuitously) kill constants; ${\mathcal{L}}_{\cc, v}$ also
has an eigenvalue $0$.  

It is instructive to compute the first order and second order terms in the expansion (\ref{EXpansion}), where the renormalization issue first arises. 
In preparation, we will represent $V{Q_v}{\mathcal{L}}^{-1}_v$ operating on a functional as a time integral,
  \begin{align}
   V\frac{Q_v}{{\mathcal{L}}_v}{F}(\Phi)= -\int_0^{\infty} dt Ve^{t{\mathcal{L}}_v}Q_v{ F}(\Phi),
\end{align}
 ignoring the issue of which $F$'s are in the domain of the operation, (although polynomials in the $\Phi(f_{\nvec})$'s certainly are).
Since $V$ kills constants we can replace the $Q_v$ simply by the identity, and use the semigroup representation
 Eqs.(\ref{semigroup},\ref{stochasticevol}),
    \begin{align}
             V\frac{Q_v}{{\mathcal{L}}_{v}}{ F}(\Phi)= -\int_0^{\infty}dt\, VE_{\Phi}\left[{F}(\Phi_t)  \right],
                         \end{align}
                          the expectation $E[\cdot]$ here with respect to  Brownian motion.  

             We define the function $D_v(x,t; z)$,
           \begin{align}\label{Ddefinition}
              D_v(x,t;z)&\equiv \frac{\delta}{\delta \pi(z)}\phi_t(x)=\frac{\delta}{\delta \pi(z)}\sum_{\nvec}P_{\nvec,\phi(x)}\Phi_t = \sum_{\nvec}\frac{e_{\nvec,\phi}(x)e^{t\lambda_\nvec} f^*_{\nvec,\pi}(z)}{\langle f_{\nvec},e_{\nvec}\rangle}\nonumber\\
                 & = \sum_{\nvec} e^{t\lambda_{\nvec}}P_{\nvec,\phi(x),\pi(z)}%= \exp({t\mA})_{\phi(x),\pi(z)}
                                        \end{align}  
which will be ubiquitous in the analysis ($P_{\nvec,\phi,\pi}$ is the $\phi,\pi$ entry of the matrix $P_{\nvec}$). Its manipulations are  analogous  to normal ordering
of creation/annihilation operators. The series is only conditionally convergent, $t>0$. 

   For the remainder of the discussion, we confine attention just to the two point function case, $F(\Phi)= \phi(x)\phi(y)$.  For this choice, the leading term in the expansion (\ref{EXpansion}) is         
\begin{align}
      & \cc E_{\mu_v}\!\!\left[V\frac{Q_v}{{\mathcal{L}}_v} F(\Phi)\right]= \cc E_{\mu_v}\!\!\!\left[\mathop{\iint}_{\substack{ t\in [0\infty), \\z\in [0,2\pi]}}\!\!\!\! dt dz \phi^3(z) (\phi_t(x)D_v(y,t;z) +\phi_t(y)D_v(x,t;z)  \right]\nonumber\\
                    & =3\cc \!\!\!\!\mathop{\iint}_{\substack{t\in [0\infty)\\ z\in [0,2\pi]}} dt\,dz \,C_v(z,z,0)\big(C_v(z,x,t)D_v(y,t;z)\! + \!C_v(z,y,t)D_v(x,t;z)\big)
     \end{align}
by Wick's theorem. The expression corresponds to the two diagrams shown:

\begin{figure}[h]
  \begin{center}
  \begin{tikzpicture}[scale=.8]
\Vertex[size=.01 , position=180,label=$z$,style={line width=.5pt},color=white
        ]{2}
\Vertex[x=3,y=-.3,size=.05 ,position=below, label=$x$,style={line width=.5pt},color=white
        ]{1}
\Vertex[x=3,y=.3,size=.1 ,position=above, label=$y$,style={line width=.5pt},color=white
]{0}
\Edge[loopsize=1.5cm,loopposition=270,loopshape=30, lw=.5pt ,label=$C_v$,position={left=1mm}  ](2)(2)
\Edge[bend=-30, lw=.5pt ,label=$C_{v}$ % ,position={above}
](2)(1)
\Edge[bend=30, lw=.5pt ,label=$D_v$ %,position={above}
](2)(0)

\end{tikzpicture}
\hspace{2cm}
  \begin{tikzpicture}[scale=.8]
\Vertex[size=.01 , position=180,label=$z$,style={line width=.5pt},color=white
        ]{2}
\Vertex[x=3,y=-.3,size=.05 ,position=below, label=$x$,style={line width=.5pt},color=white
        ]{1}
\Vertex[x=3,y=.3,size=.1 ,position=above, label=$y$,style={line width=.5pt},color=white
]{0}
\Edge[loopsize=1.5cm,loopposition=270,loopshape=30, lw=.5pt ,label=$C_v$,position={left=1mm}
      ](2)(2)
\Edge[bend=-30, lw=.5pt ,label=$C_{v}$%,position={above}
](2)(0)
\Edge[bend=30, lw=.5pt ,label=$D_{v}$%,position={above}
](2)(1)

\end{tikzpicture}
\end{center}
\vspace{-.7cm}

\caption{$\cc^1$-diagrams. The upper loops correspond to the $D_v$ functions, the lower loops to the covariances $C_v$. 
 The horizontal direction is time.}
   \end{figure}
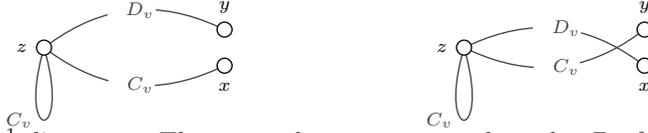

Substituting the mode expansions for $C_v(z,\cdot,t)$ and  $D_v(\cdot,t;z)$, then doing the $t$ integral, but only retaining the resonant contributions  where
an $e^{t\lambda_{\nvec}}$ is integrated against $e^{t\lambda^*_{\nvec}}$ and keeping just the diagonal part of the double series for $C_v$, we obtain a series
        \begin{align}\label{23skidoo} 
           3\cc  \sum_{\nvec} \int \!dzC_v(z,z,0) \frac{P^{\phantom{*}}_{\nvec,\phi(x)} {\bf T}P^{\dagger}_{\nvec,\phi(z)}}{(\lambda_{\nvec}+\lambda^*_{\nvec})^2}P_{\nvec,\phi(y),\pi(z)}
             \end{align}
             and an identical one with $x$ and $y$ interchanged.  The off-diagonal, i.e., non-resonant or nearly resonant, contributions are absolutely summable.   The convergence of this sum (\ref{23skidoo}) is somewhat problematic; in terms of superficial degree of divergence \cite{HZ}, the $P_{\nvec,\phi(x),r}\sim 1/n^2$, $P_{\nvec,\phi(x),\pi(z)}\sim 
          1/n$ and the small denominator $(\lambda_{\nvec} +\lambda^*_{\nvec})^{-2}\sim n^4$.  Thus the $\nvec$ term is of degree $-1$ and the sum of degree $0$. 
          For this diagram however, and with further analysis of the $e_{\nvec,\pi}$'s and $f_{\nvec,\pi}$'s, there is a fortuitous effective cancellation of the  $\nvec$ and $-\nvec$ modes for large $n$, and with this cancellation the sum is finite. The $z$-integral does not enhance the convergence of the sum.
 
 Terms of order $\cc^2$  in Eq. (\ref{EXpansion}), correspond to the associated  diagrams shown, Fig. 2, (and $x$ and $y$ exchanged). For example, the first diagram,  the breaching whale,  corresponds to the integral,
 \begin{figure}[h]
\begin{center}
\begin{tikzpicture}[scale=.4]
%\clip (0,0) rectangle (6,6);
\Vertex[size=.1 ,label=$z_1$,position={above}, style={line width=.5pt},color=white
        ]{3}
\Vertex[x=3,size=.1 ,label=$z_2$,position={above}, style={line width=.5pt},color=white
        ]{2}
\Vertex[x=6,y=-.3,size=.1, position=below,label=$x$, style={line width=.5pt},color=white
]{1}
\Vertex[x=6,y=.3,size=.1, position= above,label=$y$, style={line width=.5pt},color=white
]{0}
\Edge[bend=45, lw=.5pt %,label=$k_1$,position={above}
      ](3)(0)
\Edge[bend=45, lw=.5pt %,label=$k_2$,position={above}
      ](2)(1)
\Edge[bend=-30, lw=.5pt %,label=$k_a$,position={above}
      ](3)(2)
\Edge[bend=-50, lw=.5pt %,label=$k_b$,position={above}
      ](3)(2)
\Edge[bend=-80, lw=.5pt %,label=$k_c$,position={below}
      ](3)(2)
 \end{tikzpicture}
\begin{tikzpicture}[scale=.4]
%\clip (1,1) rectangle (6,6);
\Vertex[size=.1 ,label=$z_1$,position={above}, style={line width=.5pt},color=white
        ]{3}
\Vertex[x=3,size=.1 ,label=$z_2$,position={above}, style={line width=.5pt},color=white
        ]{2}
\Vertex[x=6,y=-.3,size=.1, position=below,label=$x$, style={line width=.5pt},color=white
]{1}
\Vertex[x=6,y=.3,size=.1, position= above,label=$y$, style={line width=.5pt},color=white
]{0}
\Edge[bend=45, lw=.5pt %,label=$k_1$,position={above}
      ](3)(2)
\Edge[bend=45, lw=.5pt %,label=$k_2$,position={above}
      ](2)(0)
\Edge[bend=-30, lw=.5pt %,label=$k_a$,position={above}
      ](3)(2)
\Edge[bend=-30, lw=.5pt %,label=$k_b$,position={above}
      ](2)(1)
\Edge[loopsize=1.5cm,loopposition=270,loopshape=30, lw=.5pt %,label=$k_c$,position={below}
      ](3)(3)
 \end{tikzpicture}
\begin{tikzpicture}[scale=.4]
%\clip (0,0) rectangle (6,6);
\Vertex[size=.1 ,label=$z_1$,position={above}, style={line width=.5pt},color=white
        ]{3}
\Vertex[x=3,size=.1 ,label=$z_2$, position={above}, style={line width=.5pt},color=white
        ]{2}
\Vertex[x=6,y=-.3,size=.1, position=below,label=$x$, style={line width=.5pt},color=white
]{1}
\Vertex[x=6,y=.3,size=.1, position= above,label=$y$, style={line width=.5pt},color=white
]{0}
\Edge[bend=45, lw=.5pt %,label=$k_1$,position={above}
      ](3)(0)
\Edge[bend=45, lw=.5pt %,label=$k_2$,position={above}
      ](2)(1)
\Edge[bend=-30, lw=.5pt %,label=$k_a$,position={above}
      ](3)(2)
\Edge[loopsize=1.5cm,loopposition=270,loopshape=30, lw=.5pt %,label=$k_b$,position={above}
      ](3)(3)
\Edge[loopsize=1.5cm,loopposition=270,loopshape=30, lw=.5pt %,label=$k_c$,position={below}
      ](2)(2)
 \end{tikzpicture}
\begin{tikzpicture}[scale=.4]
%\clip (1,1) rectangle (6,6);
\Vertex[size=.1 ,label=$z_1$,position={above}, style={line width=.5pt}, color=white
        ]{3}
\Vertex[x=3,size=.1 ,label=$z_2$,position={above}, style={line width=.5pt},color=white
        ]{2}
\Vertex[x=6,y=-.3,size=.1, position=below,label=$x$, style={line width=.5pt},color=white
]{1}
\Vertex[x=6,y=.3,size=.1, position= above,label=$y$, style={line width=.5pt},color=white
]{0}
\Edge[bend=45, lw=.5pt %,label=$k_1$,position={above}
      ](3)(2)
\Edge[bend=45, lw=.5pt %,label=$k_2$,position={above}
      ](2)(0)
\Edge[bend=-45, lw=.5pt %,label=$k_a$,position={above}
      ](3)(1)
\Edge[loopsize=1.5cm,loopposition=270,loopshape=30, lw=.5pt %,label=$k_b$,position={above}
      ](2)(2)
\Edge[loopsize=1.5cm,loopposition=270,loopshape=30, lw=.5pt %,label=$k_c$,position={below}
      ](3)(3)
 \end{tikzpicture}
\caption{$\cc^2$-diagrams. Terminal vertices $x$ and $y$ can be interchanged.}

\end{center}

 \end{figure}
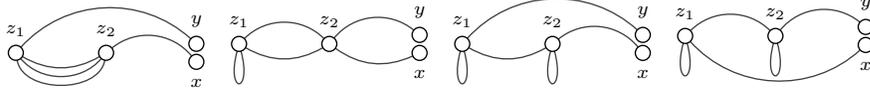
\begin{align}
6\cc^2 \iiiint dt_1  dz_1 dt_2 dz_2 D_v(y,t_1+t_2;z_1)D_v(x,t_2;z_2) C_v^3(z_1, z_2, t_1).
   \end{align} 

 The diagrams encode the rules for a diagram with an arbitrary number of vertices. Each vertex, other than the terminal vertices $x, y$, has valence $3$ and
 is the source of a $D$-loop,  $x$ and $y$ each having 
 one unit of valence.  A $D$-loop consumes $1$ unit of  
 valence only at its right end.  A $C$-loop consumes one unit of
 valence at each of its ends.   An order $\cc^m$-diagram with $m$
 vertices plus the $x,y$ terminals, has $m$ $D$-loops and $m+1$
 $C$-loops. In effect, $D$ has summands of superficial degree of
 divergence $-1$, behaving as $n^{-1}$ and $C$ has summands of degree
 $-2$, behaving as $n^{-2}$ (the diagonal part $\mvec=\nvec$ being of
 highest degree). A  diagram involves an $m +(m+1)= (2m+1)$-fold
 summation over the modes, 
 counting $C$ as a single sum.
 
For a $\cc^m$-diagram there are $m$ time integrals to be performed.  Retaining just the most singular resonant contributions results in $m$ constraints among the modes so the total summation becomes $(m+1)$-fold.  But each time integral, so constrained to the resonant
case introduces a small denominator which can be of degree $+2$.  Thus the superficial degree of a diagram with $m$ vertices plus the terminal $x,y$ ones
is  $D$-degrees+ $C$-degrees +small denominators +summations $= -m -2(m+1)+ 2m+ (m+1) = -1$.

 For the  diagrams in Fig. 2,  $m=2$; there are $2$ $D$-loops
and $3$ $C$-loops, and $2$ time-integrals.   For the breaching whale, the first of the diagrams, the time integrals result in the new small denominators $\lambda^{\phantom{*}}_{\nvec}+\lambda^*_{\nvec}$ and $(\lambda^{\phantom{*}}_{\nvec}\!+\lambda^*_{\nvec_1}\!+\lambda^*_{\nvec_2}\!+\lambda^*_{\nvec_3})$ (the latter with $\lambda$'s with negative real parts, so the sum is no smaller than
$\sim \sum_i |\widehat{\alpha}(n_i)|^2/(n_i)^2$).  Each of these small denominators  contributes a $+2$ to the degree.  The summation is $3= (m+1)$-fold. 
The resulting integral 
\begin{align}
        -6 \cc^2\!\! \mathop{\sum\sum\sum}_{\substack{\nvec_1,\nvec_2, \nvec_3\\ n=n_1+n_2+n_3}} \iint\!\!\prod_{i=1}^3 \frac{P^{\phantom{\dagger}}_{\nvec_i,\phi(z_2)}{\bf T}P^{\dagger}_{\nvec_i,\phi(z_1)}}{(\lambda^{\phantom{*}}_{\nvec_i}+\lambda^*_{\nvec_i})} \frac{P^{\phantom{*}}_{\nvec,\phi(y),\pi(z_1)}P^*_{\nvec,\phi(x),\pi(z_2)} \,\,dz_1 dz_2}{(\lambda^{\phantom{*}}_{\nvec}\!+\lambda^*_{\nvec_1}\!+\lambda^*_{\nvec_2}\!+\lambda^*_{\nvec_3})(\lambda^{\phantom{*}}_{\nvec}\!\!+\lambda^*_{\nvec})}
\end{align}
is finite.\\

The third diagram  in Fig. 2 , (breaching whale with tadpole earrings),  has the integral \begin{align}
    9 \cc^2  {\mathop{\iiiint}}dt_1 dz_1dt_2 dz_2 & C_v(z_1,z_1,0)C_v(z_2,z_2,0) C_v(z_1,z_2,t_1)\times \big. \nonumber\\
           &\big.  D_v(y,t_1+t_2; z_1)D_v(x,t_2;z_2).  
\end{align}
Again, doing the time integrals and retaining just the resonant terms results in the single sum
\begin{align}
  -9\cc^2 \sum_{\nvec} \iint dz_1 dz_2 & C_v(z_1,z_1,0)C_v(z_2,z_2,0) \; \times
  \nonumber\\
 & \frac{P^{\phantom{\dagger}}_{\nvec,\phi(z_2)}\!{\bf T}P^{\dagger}_{\nvec,\phi(z_1)}}{(\lambda^{\phantom{*}}_{\nvec}+\lambda^*_{\nvec})^3} {P^{\phantom{*}}_{\nvec,\phi(y),\pi(z_1)}P^*_{\nvec,\phi(x),\pi(z_2)}}.
\end{align}
Including the superficial degree of each of the tadpoles as $-1$, $C_v(z_i,z_i,0)$ counts
as $-1$,  the over all degree of the diagram is still $-1$.  ( $\int dz C_v(z,z,0)e^*_{\nvec,\pi}(z)f_{\nvec,\pi} (z)$ does not decay, $n\rightarrow\infty$.)    But the remaining summand above
is seen to have degree $0$,  the sum has degree $+1$, and indeed the sum is divergent. The fourth diagram in Fig. 2 has
integral which also diverges and which does not cancel that of the third.
  The remedy is to get rid of diagrams with tadpoles.

\end{subsection}
\begin{subsection}{Simple renormalization} 
 
Diagrams of any order in $\cc$ with these simple tadpoles are eliminated by a renormalization, an appropriate alteration in the measure $\mu_v$ and shift in the interaction.   We begin with the original problem with the linear dynamics having  the potential $v(x)= 0$ and generator ${\mathcal{L}}_o$.    
  The idea is to subtract a  linear drift term  from $ {\mathcal{L}}_o$ and add it in as part of the perturbation, 
                       \begin{align}
                                &{\mathcal{L}}_o\rightarrow {\mathcal{L}}_{v}\equiv{\mathcal{L}}_o-\int dx v(x) \phi(x)\frac{\delta}{\delta \pi(x)}, \text{ and} \\
                                      & -\cc\int dx\phi^3(x)\frac{\delta}{\delta \pi(x)} \rightarrow -\int dx (\cc \phi^3(x) -v(x)\phi(x))\frac{\delta}{\delta \pi(x)}
                                       \end{align} 
in such a manner that $\cc \phi^3(x)-v(x)\phi(x)$ is, {\em for all} $x$, a third order Wick polynomial with respect to $\mu_v$,  the  Gaussian measure  invariant
under the evolution generated by ${\mathcal{L}}_v$.  This is the case if
                          \begin{align}\label{tadpolekiller}
                      v(x) = 3\cc E_{\mu_v}[\phi^2(x)].
               \end{align}
               With this choice for $v$, the additional term in the interaction will cancel the self-loops.
This {\em implicit} equation for $v$ has a unique solution for small coupling constant $\cc$ by the contraction mapping theorem \cite{RS}.  In outline, and recalling
the expansion for the covariance (\ref{covdefnition}), we consider the mapping 
        \begin{align}
         v\rightarrow \tilde{v}_v\equiv C_v(x,x,0)= -\mathop{\sum\sum}_{\mvec, \nvec}\frac{ P^{\phantom{\dagger}}_{\mvec,\phi(x)}(v),{\bf T}
 P^{\dagger}_{\nvec,\phi(x)}(v)}{\lambda_{\mvec}(v)+\lambda_{\nvec}^*(v)}.
         \end{align}
   Lemma(\ref{fixedpointestimates}) of the appendix 
provides estimates on the $v$-dependence of the projections and eigenvalues which imply for some $\varepsilon>0$ and $v$'s in the ball $B= \{v: \|v\|_{\mathcal{C}}\leq \varepsilon \|\widehat{\alpha}\|^2_{\infty}\}$,  the map       
is Lipschitz in $v$, $\|\tilde{v}_{v_2}-\tilde{v}_{v_1}\|_{\mathcal{C}}\leq C\|v_2-v_1\|_{\mathcal{C}}$, for some constant $C$. (We know that $\tilde v$ is a continuous
function of $x$.)   Note particularly that the second line of (ii) of the Lemma shows that the small denominator, $\lambda_{\nvec}+\lambda_{\nvec}^*$  moves
in a manageable way as a function of $v$. Thus for small enough coupling constant $\cc$, $v\rightarrow 3\cc\tilde{v}_v$ lands in the
ball $B$ and is a strict contraction, hence the existence of a fixed point.  \\

The above remarks do not exhaust the problem of tadpoles. For example, one can have an order $\cc^2$-\textit{composite} tadpole, as shown in Fig. 3, requiring
higher order corrections to $v$ and solving a new fixed point problem of higher order in $\cc$.    

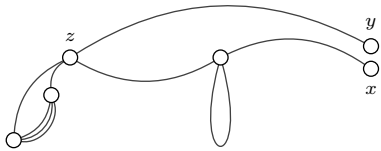
\begin{figure}
\begin{center}
    
    \begin{tikzpicture}
\Vertex[size=.01 , x=-.75,y=-1.1,color=white,style={line width=.5pt}%, position=180,label=$z$
]{1}
\Vertex[size=.01 , x=-.25,y=-.5, color=white,style={line width=.5pt}%, position=180,label=$z$
]{2}
\Vertex[size=.01 , x=0,y=0, color=white,style={line width=.5pt}, label=$z$, position=90
]{3}
\Vertex[size=.01 , x=2,y=0, color=white,style={line width=.5pt}%, position=180,label=$z$
]{4}
\Vertex[size=.01 , x=4,y=-.15, color=white,style={line width=.5pt}, label=$x$,position=270
]{5}
\Vertex[size=.01 , x=4,y=.15, color=white,style={line width=.5pt}, position=90,label=$y$
]{6}
\Edge[bend=-30, lw=.5pt %,label=$C_{v}$%,position={above}
](1)(2)
\Edge[bend=-45, lw=.5pt %,label=$C_{v}$%,position={above}
](1)(2)
\Edge[bend=-60, lw=.5pt %,label=$C_{v}$%,position={above}
](1)(2)
\Edge[bend=30, lw=.5pt %,label=$C_{v}$%,position={above}
](1)(3)
\Edge[bend=30, lw=.5pt %,label=$C_{v}$%,position={above}
](2)(3)
\Edge[bend=-30, lw=.5pt %,label=$C_{v}$%,position={above}
](3)(4)
\Edge[loopsize=1.5cm,loopposition=270,loopshape=30, lw=.5pt %,label=$C_v$,position={left=1mm}
](4)(4)
\Edge[bend=30, lw=.5pt %,label=$C_{v}$%,position={above}
](3)(6)
\Edge[bend=30, lw=.5pt %,label=$C_{v}$%,position={above}
](4)(5)
      \end{tikzpicture}
      \caption{$\cc^4$-diagram with a composite and a simple tadpole.}

\end{center}

    \end{figure}
  
  \end{subsection}

\end{section}
\begin{section}{Concluding remarks}

The authors have investigated selected higher order integrals (order $\cc^6$ and higher), all of which are seen to be finite.  The 
rank one Brascamp-Lieb inequalities are of utility in estimating these integrals \cite{BL,BL2,BCCT}.
  But a general scheme for showing integrals of arbitrary order are
  finite, that the perturbation expansion is renormalizable, remains
  open.

% It is of interest to
%   investigate renormalization in the case of equilibrium,
%   $T_1=T_2$, where there is a Gibbs state as stationary measure, with integrable  factor $\exp-\frac{\cc}{4}\int\phi^4(x)dx$. and where the
%   potential  $v(x)= C_v(x,x,0)$ is just a constant, not depending on $x$.  

It is interesting to remark that in the case of equilibrium $T= T_1= T_2$, no renormalization is
 necessary. In this case, $\mu_v$ is a Gaussian Gibbs measure, with the
 momentum $\pi$  uncorrelated with $\phi$ or $r$.  For the
 $\cc^1$-term in the generating  functional Eq. (\ref{EXpansion}), we
 have, letting $V$ operate instead on $\mu_v$,  
 \begin{align}
      \cc &\int d\mu_v(\Phi) \left( V\frac{Q_v}{\mathcal{L}}_v\right) F(\Phi)
      = \frac{\cc}{T} \int d\mu_v(\Phi) \left(\int\phi^3(x)\pi(x) dx\right) \frac{Q_v}{{\mathcal{L}}_v}F(\Phi)\nonumber\\
       &= -\frac{\cc}{T}\left\langle {\mathcal{L}}^{\dagger}_v\left(\frac{1}{4}\int\phi^4(x)dx- C\right),\frac{Q_v}{{\mathcal{L}}_v}F(\Phi)\right\rangle_{L^2(\mu_v)}\nonumber\\
       &=
         -\frac{\cc}{T}\left\langle \left(\frac{1}{4}\int\phi^4(x)dx- C\right), F(\Phi)\right\rangle_{L^2(\mu_v)}
    \end{align}
 the constant $C=E_{\mu_v}[\frac{1}{4}\int\phi^4(x)dx]$ chosen so that
 the expression vanishes if $F$ is a constant.  A similar calculation
 holds for order $m>1$,   resulting
 indeed in the $\frac{1}{m!}\left(\frac{\cc}{4T}\int \phi^4(x)
 dx\right)^m$ contribution, with the whole series formally summing to
 a constant times 
 $\exp \left(-\frac{\cc}{4T}\int \phi^4(x) dx\right)$ as it should. 
${\mathcal{L}}^{\dagger}_v$, the adjoint operator to ${\mathcal{L}}_v$ with
respect to $\mu_v$, is simply ${\mathcal{L}}_v$ but with opposite signs for its
$\phi $ and $\pi $ drift terms. However, it remains to investigate the
 diagram integrals at equilibrium $T_{1}=T_{2}$.

The Gaussian measure $\mu_v$ here has a fairly explicit covariance, and it should be possible to identify 
non-negative Radon-Nikodym factors, i.e., functionals of $\phi$ and $\pi$ and $r$, integrable with respect to $\mu$, to define new
measures on the space of fields. Then the issue would  be to identify physically compelling processes globally
defined in time, modeling non-equilibrium heat flow for which the modified measures are stationary.   

% It remains to investigate the local current and (renormalized) energy density in low order perturbation theory and the
%  effect  the potential $v(x)$  has on the dynamics of the field in the non-equilibrium setting.

   It remains to investigate the expected current flow in perturbation theory averaged over $[0,2\pi]$, i.e., 
         \begin{align}
         \frac{1}{2\pi} E_{\nu_{\cc,v}}&\left[ \int_0^{2\pi}\!\!\!\! dx\, \pi(x) \partial_x\phi(x)\right] = \nonumber\\
         &  \frac{1}{2\pi}\mathop{\sum\sum}_{\mvec,\nvec}\int_0^{2\pi}\!\!\!\!\! dx \frac{ e^*_{\mvec,\pi}(x) \partial_xe^{\phantom{*}}_{\nvec,\phi}(x)}{\langle f_{\mvec},e_{\mvec}\rangle^*_{\mathcal{H}}\langle f_{\nvec},e_{\nvec}\rangle_{\mathcal{H}}}E_{\nu_{\cc,v}}\left[{\Phi}(f^*_{\mvec})\Phi(f^{\phantom{*}}_\nvec)  \right].
  \end{align}  
  The integral of $\pi$ against $\partial_x\phi_x$ is singular and requires refinements of the estimates  of lemma (\ref{enalphadeltax}) in the appendix, particularly on the real and imaginary parts of $\partial_xe_{\nvec,\phi}(x)$.
The scaling of the ring $[0,2\pi]$ to one of arbitrary size $[0,L]$
should be straightforward. We will address the issue of current flow elsewhere.

\end{section}

%%%%%%%%%%%%%%%%%%%%%%%%%%%%%%%%%
%%%%%%%%%%%%%%%%%%%%%%%%%%%%%%%%%
%%%%%%%%%%%%%%%%%%%%%%%%%%%%%%%%
%%%%%%%%%%%%%%%%%%%%%%%%%%%%%%%%

%------
% Insert acknowledgments and information
% regarding funding at the end of the last
% section, i.e., right before the bibliography.
%------

\subsection*{Acknowledgment}
We thank D.C.~Brydges for useful conversations and a
reviewer for helpful comments.

% \begin{funding}
% This work was partially supported by~\ldots
% \end{funding}

\begin{section}{Appendix}
 
Using the eigenvalue equation $\mA_ve_{\nvec}=\lambda_{\nvec}e_{\nvec}$ for the vector $e_{\nvec}$, one can solve for the components  $e_{\nvec,\phi}, e_{\nvec,r}$, just  in terms of $e_{\nvec,\pi}$,  and similarly for $f_{\nvec}$,
\begin{align}\label{intermsofpi}
&e_{\nvec}= \left(\frac{e_{\nvec,\pi}}{\lambda_{\nvec}},e_{\nvec,\pi}, \frac{\langle\alpha,e_{\nvec.\pi}\rangle}{(\lambda_{\nvec}+1)}\right)^{\intercal},\nonumber\\
&f_{\nvec} = \left(\frac{(\partial_x^2-1-v)f_{\nvec,\pi}}{\lambda_{\nvec}^*}, f_{\nvec,\pi},-\frac{\langle f_{\nvec,\pi},\alpha\rangle^{*}}{(\lambda^*_{\nvec,\pi}+1)} \right)^{\intercal}.
\end{align}
 The eigenvalue equation for $\mA_v$ is equivalent to a  Schr\"{o}dinger-like eigenvalue equation, with the eigenvalue appearing implicitly 
      \begin{align} \label{Schrod1}
                          (\partial_x^2-1 - v(x))e_{\nvec,\pi} -\frac{\lambda_{\nvec}}{\lambda_{\nvec}+1} \alpha(x)\cdot\langle\alpha,e_{\nvec,\pi}\rangle
                           = \lambda_{\nvec}^2 e_{\nvec,\pi}.                  
\end{align}
The $f_{\nvec,\pi}$ components for the adjoint problem $\mA^{\dagger}_vf_{\nvec}=\lambda^*_\nvec f_{\nvec}$ are solutions to the complex conjugate
of the above, so $f_{\nvec,\pi}= e_{\nvec,\pi}^*$.
(Note that  there are also two eigenvalues  of $\mA_v$ close to $-1$; their  corresponding eigenvectors have most of their weight
on their $r$-components, with small $\phi$ and $\pi$ components and do not  play an important role in our analysis.)

In order to obtain qualitative information on $e_{\nvec,\pi}$, we first examine eigenfunctions for 
      \begin{align}
         h_{\varepsilon} =       \partial_x^2-1 - v -\varepsilon |\alpha\rangle\cdot\langle\alpha| 
         \end{align}
          acting  in $L^2[0,2\pi]$ with periodic boundary conditions; $v= v(x)$ is real and continuous with bound $\|v\|_{\mathcal{C}}\equiv \sup_x|v(x)|\leq \varepsilon_o\|\widehat{\alpha}\|^2_{\infty}$, $\varepsilon_o$ prescribed below. The $\varepsilon$ is complex, of modulus less than $1$, and ultimately just equal
 to $\lambda_{\nvec}/(\lambda_{\nvec}+1)$.  The $\alpha$'s are as assumed in the text.
 This analysis of $h_{\varepsilon}$ largely reviews  that in
 \cite{T1}, but also provides estimates with $v$ needed for the
 perturbation expansion.

   Let $\{\psi_n\}$ be the $L^2$-normalized eigenfunctions for $h_{\epsilon}$,   $h_{\varepsilon}\psi_n=\lambda_n^2\psi_n$. (The overall sign of
   $\lambda_{\nvec}^2$ on the right side is jarring but in keeping with Eq.(\ref{Schrod1}).) We assume that
   $\|\widehat{\alpha}\|^2_{\infty}$ is small so that $\lambda_{n}^2$ is close to $-(n^2+1)$, the eigenvalue for $\partial_x^2-1$, $|\lambda_n^2+(n^2+1)|\sim\|\widehat{\alpha}\|^2_{\infty} $, uniformly in $n$.  To see this uniformity, suppose first that $v=0$. Then
         the two  components $\langle \alpha_i,\psi_{n}\rangle,\,\,i=1,2$, satisfy
                   \begin{align}
                              \langle \alpha,\psi_{n}\rangle= \varepsilon\langle\alpha,(\partial_x^2-1 -\lambda_{n}^2)^{-1}\alpha\rangle\cdot\langle \alpha,\psi_{n}\rangle.
                                                   \end{align}
 The eigenvalues $\{\lambda^2_{\nvec}\}$ are zeros of a $2\times2$ determinant function of $\lambda^2$. 
  Writing the resolvent $(\partial_x^2-1-\lambda^2_n)^{-1}$ in its spectral representation, pulling out the $n$ and $-n$ terms, and setting $\Delta\lambda^2_n=(-(n^2+1))-\lambda^2_{n}$ as the shift in eigenvalue from that of $\partial_x^2-1$, we have that
 \begin{align}
      \Delta \lambda^2_n\langle \alpha, \psi_n\rangle = \varepsilon\left(\langle\alpha,P^o_{n,-n}\alpha\rangle + \Delta \lambda^2_n\left\langle \alpha,\frac{Q^o_{n,-n}}{(\partial_x^2-1-\lambda_n^2)}\alpha\right\rangle\right)\cdot \langle \alpha,\psi_{n}\rangle
 \end{align} 
 with $P^o_{n,-n}$ projection onto the subspace spanned by $e^{inx}$ and $e^{-inx}$, $Q^o_{n,-n}= {\mathds 1}-P^o_{n,-n}$. 
The $2\times2$-matrix in bra-ket  inside the  parentheses is ${\mathcal{O}}(n^{-\gamma})\|\widehat{\alpha}\|^2_{\infty}$, $\gamma<1$ (with constant depending on $\gamma$),  by the Estimate  \ref{Estimatei-ii} below, so that to leading order,
$\Delta\lambda_n^2$ is just an eigenvalue of  $\varepsilon\langle \alpha, P^o_{n,-n}\alpha\rangle$ and itself 
of size $ |\varepsilon| |\widehat{\alpha}(n)|^2$ with corrections $\sim \Delta\lambda_n^2 n^{-\gamma}$.  By  the non-degenerate condition on the $\alpha$'s,  the two eigenvalues of $\langle \alpha, P^o_{n,-n}\alpha\rangle$ split to the
     same size as their shift, i.e. there exists a positive constant $c_1$ independent of $n$ such that
         \begin{align}\label{eigenshift}
          |\Delta\lambda_n^2-\Delta\lambda_{-n}^2|\geq c_o|\varepsilon| |\widehat{\alpha}(n)|^2.
     \end{align}
    If an additional term $v$  suitably  small is added to $\partial_x^2-1$,  $\|v\|_{\mathcal{C}} \leq |\varepsilon_o |\widehat{\alpha}(n)|^2$, for some
    $\varepsilon_o>0$, Ineq.(\ref{eigenshift}) still holds, but with a  reduced positive constant $c_o$.
 Again, the $\psi_n$'s will be identified with the $e_{\nvec,\pi}$-components. There will
be {\em four}  eigenvalues for $\mA_v$  of "level'' $ n$, $n\neq 0$, two if $n=0$ and two near $-1$. In the case $n\neq 0$,  the eigenvalues are  $\pm i(\sqrt{n^2+1} +\frac{1}{2}\Delta\lambda_{\pm n}^2/\sqrt{n^2+1})$. For $\varepsilon= \lambda_{\nvec}/(\lambda_{\nvec} +1)$ which has an ${\mathcal{O}}(1/n)$ imaginary part, $\lambda_{\nvec}$  acquires a small negative real part $\sim -|\widehat{\alpha}(n)|^2/n^2$.

In the following, we are assuming that the eigenvalues are as above, that there are pairs of eigenvalues $\lambda_n^2$ and $\lambda_{-n}^2$ close
to $-(n^2+1)$ and separated by $\sim \|\widehat{\alpha}\|^2_\infty$. 
  
 \begin{lemma}\label{enalphadeltax} Assume that the eigenfunctions $\{\psi_n\}_{n\in {\mathds Z}}$ are $L^2$-normalized.  Then
     the sequences  $ \{\langle\alpha,\psi_n\rangle\}_{n\in {\mathds Z}}$ and 
      $\{\sup_x |e_n(x)|\}_{n\in {\mathds Z}}$ are uniformly bounded in $n$.  Moreover, there exists a positive constant $c$ and a positive $c_\gamma$ depending on $\gamma<1$ but independent of $n$ such that
       \begin{align}\label{enalphadeltaxeqdelta}
            |\psi_n(x)-\psi_n(y)|\leq c\left(|n||x-y| + c_\gamma|x-y|^\gamma\right)      \end{align}
       for each $n$.        
               There exists a positive constant $c'$ such that
       if  $|\Im \varepsilon|\leq 
       c'\Re \varepsilon$ with $\Re\varepsilon\neq 0$ (and possibly redefining $\psi_n^*$ by an overall phase factor), then
           \begin{align}\label{enalphadeltax.eq2}
           \|\psi_n- \psi_n^*\|_{L^2}\leq c'|\Im\varepsilon|\left(\frac{1}{|\Re\varepsilon|} +\frac{1}{|n|+1}    \right). 
            \end{align}
  These estimates are also uniform in $\varepsilon$ provided $1/2\leq |\varepsilon|\leq 1$  and in the potential $v$,  provided $\|v\|_{\mathcal{C}}\leq \varepsilon_o\| \widehat{\alpha}\|^2_{\infty}$ for some positive $\varepsilon_o$.
   \end{lemma}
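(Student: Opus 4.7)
The plan is to exploit the fact that $h_\varepsilon$ differs from the self-adjoint operator $h_0^v:=\partial_x^2-1-v$ only by a rank-two perturbation, so every $L^2$-normalized eigenfunction is determined by the two-vector $\langle\alpha,\psi_n\rangle\in\mathbb{C}^2$ through the resolvent identity
\begin{equation*}
\psi_n\;=\;\varepsilon\,(h_0^v-\lambda_n^2)^{-1}\alpha\cdot\langle\alpha,\psi_n\rangle.
\end{equation*}
Pairing with $\alpha_i$ reduces matters to a $2\times 2$ determinant equation which pins down $\lambda_n^2$ and the direction of $\langle\alpha,\psi_n\rangle$, while $\|\psi_n\|_{L^2}=1$ fixes its magnitude. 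Throughout I will split the resolvent $(h_0^v-\lambda_n^2)^{-1}=R_n^{(n)}+R_n^{(\perp)}$, where $R_n^{(n)}$ is the near-resonant level-$n$ piece coming from the two eigenfunctions of $h_0^v$ with eigenvalue close to $-(n^2+1)$ (producing small denominators of order $|\Delta\lambda_n^2|$), and $R_n^{(\perp)}$ is the off-level complement on which the gap to $\lambda_n^2$ is $\gtrsim|n|$ so that the resolvent is bounded and gains regularity.

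For the uniform bounds, the level-$n$ piece contributes $A_n^+\eta_n^+ + A_n^-\eta_n^-$ with $A_n^\pm\sim\varepsilon\,\widehat{\alpha}(\pm n)\langle\alpha,\psi_n\rangle/\Delta\lambda_{\pm n}^2$, while the off-level tail has $L^2$-norm only $\mathcal{O}(n^{-\gamma})$ by Estimate \ref{Estimatei-ii}. Combined with $\|\psi_n\|_{L^2}=1$ and the uniform lower bound $|\widehat{\alpha}(n)|^2\geq c_1$, this forces $|\langle\alpha,\psi_n\rangle|\lesssim|\widehat{\alpha}(n)|\leq\sqrt{c_0}$ uniformly, which in turn bounds $|A_n^\pm|$. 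The pointwise bound on $\psi_n$ then follows by estimating $R_n^{(\perp)}\alpha$ in Fourier: its coefficients are $\lesssim|\widehat{\alpha}(k)|/|k^2-n^2|$ for $k\neq\pm n$, and the sum converges uniformly in $n$ after splitting into $|k|\leq n/2$, $n/2<|k|<2n$, and $|k|>2n$. For the modulus of continuity, the leading $A_n^\pm e^{\pm inx}$ terms supply the $|n||x-y|$ contribution, and the remainder is $\gamma$-H\"older for any $\gamma<1$: the elementary bound $|e^{ikx}-e^{iky}|\leq 2^{1-\gamma}|k|^\gamma|x-y|^\gamma$ reduces this to checking that $\sum_{k\neq\pm n}|k|^\gamma/|k^2-n^2|$ stays bounded uniformly in $n$, which it does precisely because $\gamma<1$.

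For the estimate on $\|\psi_n-\psi_n^*\|_{L^2}$, I will compare both $\psi_n$ and $\psi_n^*=\overline{\psi_n}$ to the real $L^2$-normalized eigenfunction $\widetilde{\psi}_n$ of the self-adjoint operator $h_{\Re\varepsilon}$. The operator difference $h_\varepsilon-h_{\Re\varepsilon}=-i\,\Im\varepsilon\,|\alpha\rangle\langle\alpha|$ has norm $\lesssim|\Im\varepsilon|$, and Riesz-projection non-degenerate perturbation theory expresses $\psi_n-\widetilde{\psi}_n$ as a reduced resolvent of $h_{\Re\varepsilon}$ applied to $(h_\varepsilon-h_{\Re\varepsilon})\widetilde{\psi}_n$. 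This correction splits into a near-resonant level-$n$ piece, whose 2-dimensional block is diagonalized using the two-level splitting $|\lambda_n^2-\lambda_{-n}^2|\gtrsim|\Re\varepsilon|\,|\widehat{\alpha}(n)|^2\gtrsim|\Re\varepsilon|$ from \eqref{eigenshift} and contributes $\lesssim|\Im\varepsilon|/|\Re\varepsilon|$, and an off-level piece controlled by the spectral gap $\gtrsim|n|$ that contributes $\lesssim|\Im\varepsilon|/(|n|+1)$. After rotating the overall phase of $\psi_n$ so that $\widetilde{\psi}_n$ is its real part at leading order, the triangle inequality $\|\psi_n-\psi_n^*\|_{L^2}\leq\|\psi_n-\widetilde{\psi}_n\|_{L^2}+\|\widetilde{\psi}_n-\psi_n^*\|_{L^2}$ yields the claimed bound.

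The main obstacle is this last step: the two level-$n$ eigenvalues are only separated by $\mathcal{O}(|\Re\varepsilon|\,|\widehat{\alpha}(n)|^2)$, so the $2\times 2$ near-resonant block cannot be handled by a naive spectral gap estimate but must be diagonalized with the explicit small denominator kept in view. Making all bounds uniform in $v$ for $\|v\|_{\mathcal{C}}\leq\varepsilon_o\|\widehat{\alpha}\|_\infty^2$ and in $\varepsilon$ on $\{1/2\leq|\varepsilon|\leq 1\}$ then reduces to verifying that the quantitative splitting \eqref{eigenshift} and the off-level gap persist under these perturbations, which is precisely what the smallness hypothesis on $v$ together with the non-degeneracy assumption on $\alpha$ guarantee.
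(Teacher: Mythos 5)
Your argument coincides with the paper's for the modulus-of-continuity estimate and for the comparison of $\psi_n$ with $\psi_n^*$ via the self-adjoint reference operator $h_{\Re\varepsilon}$ (near-resonant $m=-n$ term controlled by the splitting \eqref{eigenshift}, off-level tail by the spectral gap $\sim|n|$, followed by a phase rotation), but it departs from the paper in how the uniform boundedness of $\{\langle\alpha,\psi_n\rangle\}$ is obtained. You invert the rank-two perturbation exactly, $\psi_n=\varepsilon(h_0^v-\lambda_n^2)^{-1}\alpha\cdot\langle\alpha,\psi_n\rangle$, split the resolvent into near-resonant and off-level pieces, and read the bound on $|\langle\alpha,\psi_n\rangle|$ off the $L^2$-normalization together with the nondegeneracy of the pair $(\widehat{\alpha}(n),\widehat{\alpha}(-n))$. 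The paper instead writes $\psi_n=P^o_{n,-n}\psi_n+(P_{n,-n}-P^o_{n,-n})\psi_n$, converts the difference of Riesz projections into a second-resolvent sum over $m\neq\pm n$ only, and pairs with $\alpha$ to obtain an implicit inequality $|\langle\alpha,\psi_n\rangle|\leq|\widehat{\alpha}(n)|+(\text{coeff.})\,|\langle\alpha,\psi_n\rangle|$ with coefficient $\sim n^{-\gamma}$, resolved by a bootstrap. The paper's route never touches the singular near-resonant denominators at all (the resonant block is absorbed into $P^o_{n,-n}\psi_n$, trivially bounded by normalization), whereas your route requires two-sided control $|\lambda^2_{o,\pm n}(v)-\lambda_n^2|\sim|\varepsilon|\,|\widehat{\alpha}(n)|^2$ on the near-resonant gaps before the normalization argument closes; your route, in exchange, pins down $|\langle\alpha,\psi_n\rangle|$ directly without iteration. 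Two points to tighten: (i) in your near-resonant piece, both denominators are gaps of the $h_0^v$-resolvent at the \emph{single} eigenvalue $\lambda_n^2$, i.e.\ $\lambda^2_{o,\pm n}(v)-\lambda_n^2\approx\Delta\lambda_n^2$, not $\Delta\lambda_{\pm n}^2$; and (ii) the step ``this forces $|\langle\alpha,\psi_n\rangle|\lesssim|\widehat{\alpha}(n)|$'' needs the full nondegeneracy hypothesis $|\widehat{\alpha}(n)\cdot\widehat{\alpha}(n)|\leq c_2|\widehat{\alpha}(n)|^2$, $c_2<1$ (so that $z\mapsto(\widehat{\alpha}(n)\cdot z,\widehat{\alpha}(-n)\cdot z)$ is a uniformly well-conditioned map on ${\mathds C}^2$), not merely the lower bound $|\widehat{\alpha}(n)|^2\geq c_1$.
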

   
 \noindent {\em Remark}: The inequality (\ref{enalphadeltax.eq2}) implies as well, at least for small enough $\Im \varepsilon/\Re\varepsilon$ that a similar operator bound holds for
 the difference of projections,
  \begin{align}\label{projectiondifference}
 \left \|\frac{|\psi_n\rangle\langle \psi^*_n|}{\langle \psi^*_n,\psi_n\rangle} - \frac{|\psi_n\rangle\langle \psi_n|}{\langle \psi_n,\psi_n\rangle}\right\|\leq c'|\Im\varepsilon|\left(\frac{1}{|\Re\varepsilon|} +\frac{1}{|n|+1}    \right).
 \end{align}
 
\begin{proof} Let  $P^o_{n,-n}$  be as above and $P_{n,-n}$ projection onto the subspace spanned by $\psi_n$ and $\psi_{-n}$ corresponding to the nearly degenerate eigenvalues $\lambda^2_{n}, \lambda_{-n}^2$.  We assume $\psi_n$ to be normalized and write it as $\psi_n =\xi^o_n +(P_{n,-n} -P^o_{n,-n})\psi_{n}$ with $\xi^o_n= P^o_{n,-n}\psi_n$.  Note that $\|\xi^o_n\|_{L^2}\leq \|\psi_n\|_{L^2}=1$.   The difference of projections  is then expressed as a contour integral of resolvents using the second resolvent identity \cite{K}
\begin{align}\label{enanalysis1}
     \psi_{n} &= \xi^o_n - \frac{1}{2\pi i}\oint_{\Gamma_n} dz (\partial_x^2-1-z)^{-1}(v+\varepsilon |\alpha\rangle\cdot \langle\alpha|)(h_{\varepsilon}-z)^{-1} \psi_n \nonumber
     \\
      & =  \xi^o_n -\!\!\!\!\!\sum_{\{m: \,\,m\neq \pm n\}}\!\!\!\! \frac{\langle \psi^o_m,v\, \psi_n\rangle + \varepsilon \langle \psi^o_m,\alpha\rangle\cdot \langle\alpha,\psi_n\rangle}{m^2+1+\lambda_n^2}\,\psi^o_{m}, 
    \end{align} 
$\{e^o_m\}_{m\in {\mathds Z}}$ being the eigenfunctions of $\partial_x^2-1$. Here $\Gamma_n$ is a loop about  $\lambda_n^2$, its nearby companion $\lambda_{-n}^2$, and $-(n^2+1)$ (which again is doubly degenerate for $ \partial_x^2 -1$, $n\neq 0$), and enclosing no other eigenvalues of $h_\varepsilon$. These eigenvalues are isolated by a distance $\sim |n|$ from other eigenvalues.
   
Multiplying through the representation (\ref{enanalysis1}) by $\langle \alpha|$ gives an implicit estimate for  $\langle\alpha,\psi_n\rangle$
 \begin{align}\label{alpha e}
     |\langle \alpha,\psi_n\rangle| \leq |\widehat{\alpha}(n)| +  \!\!\!\!\!\sum_{\{m: \,\,m\neq \pm n\}}\!\!\!\! \frac{\|\widehat{\alpha}\|_\infty \|v\|_{\mathcal{C}}+|\varepsilon|\|\widehat{\alpha}\|^2_\infty |\langle \alpha,\psi_n\rangle|}{\left|m^2+1+\lambda_n^2\right|}
 \end{align}  
 The coefficient sum in front of $ |\langle\alpha,\psi_n\rangle|$ on the rhs goes as $\sim c_\gamma |n|^{-\gamma}$for any $\gamma<1$ for large $n$ by application of (i) in the Estimate(\ref{Estimatei-ii}) below, since the $\{\widehat{\alpha}(m)\}$ are uniformly bounded and the quotients $\{|m^2-n^2|/|m^2+1+\lambda_n^2|, |m|\neq |n|\}_{m,n}$ are uniformly bounded in $m,n$. This coefficient sum can be made less than $1$ in magnitude for large enough $n$.  Thus the implicit bound Ineq.(\ref{alpha e}) can be resolved
 showing that $\{\langle \alpha,\psi_n\rangle\}_{n\in {\mathds Z}}$ is uniformly bounded.   Moreover, $|\langle \alpha,\psi_n\rangle-\langle\alpha,\xi^o_n\rangle|$
 also goes to zero as $|n|^{-\gamma}$, $n\rightarrow\infty$. (We are assuming that the small $n$ terms are finite.)

 The same argument again using Eq.(\ref{enanalysis1}) shows that the $\psi_n$'s,  are uniformly bounded in $x$, since the $\psi^o_n= e^{inx}/\sqrt{2\pi}$'s and $\xi^o_n$'s are.  Note that $\|\psi_n-\xi^o_n\|_{\mathcal{C}}$ also goes to zero as $n^{-\gamma}$, $n\rightarrow\infty$.   This completes the argument for the first assertion of
 the lemma.

 To show Ineq.(\ref{enalphadeltaxeqdelta}), we note first that $|\psi^o_m(x)- \psi ^o_m(y)|\leq \min\{|m| |x-y|,1\}$ as well as $|\xi^o_n(x)-\xi^o_n(y)|\leq\min\{|n||x-y|,1\}$.
  Eq.(\ref{enanalysis1}) gives the inequality
\begin{align}
|\psi_n(x)-\psi_n(y)|\leq |n||x-y| +K_\varepsilon \!\!\!\!\!\!\!\mathop{\sum}_{\{m: \,\,m\neq \pm n\}}\!\!\!\! \frac{\min\{|m||x-y|, 1\}}{\left|m^2+1+\lambda_n^2\right|},
\end{align}
where $ K_\varepsilon =  \|\widehat{\alpha}\|_\infty \|v\|_{\mathcal{C}}+|\varepsilon|\|\widehat{\alpha}\|^2_\infty | \sup_n|\langle\alpha,\psi_n\rangle|$. 
  We use the uniform boundedness of  the quotients $|m^2-n^2|/|m^2+1+\lambda_n^2|$ and (ii) in the
Estimate(\ref{Estimatei-ii}), with $\eta=|x-y|$ to establish Ineq.(\ref{enalphadeltaxeqdelta}) of the lemma. 

The last assertion of the lemma follows by showing that $\psi_n$ and $\psi^*_n$ are small perturbations of the same eigenfunction for the self-adjoint operator
$h_{\Re\varepsilon}$ where the imaginary part of $\varepsilon$ has been set to zero.  For {\it temporary}
convenience of notation, let  $\{\psi^o_n,\lambda_{o,n}^{2}\}$ be the normalized eigenvectors and eigenvalues of $h_{\Re\varepsilon}$. Then as in Eq.(\ref{enanalysis1}),
\begin{align}
          \psi_n= P^o_n \psi_n  -i\Im\varepsilon\!\!\!\!\sum_{\{m: \,\,m\neq  n\}}\!\!\!\! \frac{ \langle \psi^o_m,\alpha\rangle\cdot \langle\alpha,\psi_n\rangle}{-\lambda_{o,m}^{2}+\lambda_n^2}\,\psi^o_{m}, 
\end{align}
$P^o_n$ projection onto the one-dimensional subspace spanned
by $\psi^o_n$.   ($\Gamma_n$ in this situation is a contour just about $\lambda_{o,n}^{2}$ and $\lambda_n^2$; we have a simple perturbation.)
The near resonant $m= -n$ summand  has a denominator at least $c_o\Re\varepsilon|\widehat{\alpha}(n)|^2$ as a consequence of the non-degeneracy assumption,
see Ineq.(\ref{eigenshift}),
 and so this term is bounded in  $L^2$-norm by $|\Im\varepsilon|/c_o|\Re\varepsilon|$ 
uniformly in $n$.   The rest of the series is bounded in an $L^2$-sense by $|c_1\Im\varepsilon|/|n|$ for a suitable constant $c_1$ from the orthogonality of the
$\psi_m$'s.  Thus 
\begin{align}
         \|\psi_n- P^o_n\psi_n\|_{L^2}\leq |\Im\varepsilon| \left( \frac{1}{c_o|\Re\varepsilon|}+ \frac{c_1}{|n|}\right).
\end{align}

The function $\psi^*_n$ satisfies precisely the same estimates;   $\psi_n$ and $\psi^*_n$ can still be such that $P^o_n\psi_n$ and $P^o_n\psi^*_n$ differ by 
a phase factor, but one can simply choose a phase factor $e^{i\theta_n}$ so that 
\begin{align}
\|\psi_n-e^{i\theta_n}\psi_n^*\|_2\leq  2 |\Im\varepsilon | \left( \frac{1}{c_o|\Re\varepsilon|}+ \frac{c_1}{|n|}\right).
\end{align}  
All of these estimates are uniform in $\varepsilon$ provided
$|\varepsilon| $ is bounded away from zero so that $\lambda^2_{\nvec}$ and $\lambda_{-\nvec}^2$,
which depend on $\varepsilon$, truly separate by $\sim\|\widehat{\alpha}(n)\|^2$.  The estimates are uniform in $v$ for $\|v\|_{\mathcal{C}}$ much smaller in norm than
this same eigenvalue separation.
\end{proof}

\begin{estimate}\label{Estimatei-ii} \textit{There exists a finite  positive constant $c_\gamma,\,\gamma<1$ independent of $n$, such that
for $\eta\geq 0$ and for any integer $n$, }
                 \begin{align}
                ( i)\,\,\, \mathop{\sum}_{m: \,\,
                 m\neq n} \frac{1}{|m^2-n^2|}\leq  \frac{c_\gamma}{1+ |n|^{\gamma}},
               \end{align}
                            \begin{align}
                 (ii)\,\,\,  \mathop{\sum}_{m: \,\, m\neq n} \frac{\min\{|m| \eta, 1\}}{|m^2-n^2|}\leq  c_\gamma \eta^{\gamma}.
                 \end{align}
\end{estimate}

\begin{proof}  It suffices to prove the estimates for $n\geq 0$  and the sums running over $m\geq 0$. \\
 (i). The estimate is clear for $n=0$.  For $n> 0$, we use H\"{o}lder's inequality, 
\begin{align}
         \sum_{m: \,m\neq n}\frac{1}{|m-n|} \frac{1}{|m+n|}\leq \big\|\frac{\chi_{m\neq n}}{(m-n)}\big\|_{\ell^p} \big\| \frac{1}{(m+n)}\big\|_{\ell^{p/(p-1)}}
            \leq \frac{c_p}{ n^{1/p}}
\end{align}
with $c_p$ finite for any $p>1$, thus for $\gamma\equiv 1/p<1$.    \\

(ii) The sum is bounded by
\begin{align}
   \sum_{m: \,m\neq n, 1\leq m\leq 1/\eta} \frac{\eta}{|m-n|} \cdot 1 +  \sum_{m: \,m\neq n, \,1/\eta< m} \frac{1}{|m-n|} \frac{1}{ (m+n)}\nonumber\\
  \leq         \big\|\frac{\chi_{m\neq n}}{(m-n)}\big\|_{\ell^{p/(p-1)}}\left(\eta \big\| \chi_{m\leq 1/\eta}\big\|_{\ell^p} + 
  \big\|\frac{\chi_{m>1/\eta}}{(m+n)}\big\|_{\ell^{p}}\right)
  \leq c_p \,\eta^{(1-1/p)}  
    \end{align}
for any $p>1$ or here, with $\gamma\equiv 1-1/p<1$.  \end{proof}

The projections  $\{P_{\nvec} = \frac{ |e_{\nvec}\rangle  \langle f_{\nvec}|}{\langle e_{\nvec},f_{\nvec}\rangle}\}$,
suitably conjugated, are uniformly bounded  for $n\rightarrow\infty$.  
  Let ${\Lambda}$ be defined by $\Lambda e = \left( (-\partial_x^2+1)^{1/2} e_{\phi}, e_{\pi},e_{r}\right)^\intercal$, putting the $\phi$ and $\pi$ components of
  $e_{\nvec}$ and $f_{\nvec}$ on a comparable $L^2$-norm  footing for large $n$.  
   
 \begin{lemma}\label{conjP}  There is a positive constant $c<1$ independent of $\nvec$ such that 
     \begin{align}\label{Pboundseq7}  c \| {\Lambda}^{-1}f_{\nvec}\|_{\mathcal{H}}\| \Lambda e_{\nvec}\|_{\mathcal{H}}       \leq | \langle f_{\nvec}, e_{\nvec}\rangle_{\mathcal{H}}| \leq \|{ \Lambda}^{-1}f_{\nvec}\|_{\mathcal{H}}\| \Lambda e_{\nvec}\|_{\mathcal{H}}.
     \end{align}
 The conjugated projections $\{\Lambda P_{\nvec}\Lambda^{-1}\}$ are uniformly bounded.
     \end{lemma}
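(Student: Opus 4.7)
The plan is to treat the two inequalities of (\ref{Pboundseq7}) separately. The upper bound is immediate: since $\Lambda$ acts componentwise and is self-adjoint on $\mathcal{H}$, one has $\langle f_\nvec, e_\nvec\rangle_\mathcal{H} = \langle \Lambda^{-1} f_\nvec, \Lambda e_\nvec\rangle_\mathcal{H}$, and Cauchy--Schwarz delivers the right-hand inequality with constant $1$. For the left-hand inequality, I would begin by computing $\langle f_\nvec, e_\nvec\rangle$ in closed form, using the component formulas (\ref{intermsofpi}), the identity $f_{\nvec,\pi}=e_{\nvec,\pi}^{*}$, and the Schr\"odinger-like equation (\ref{Schrod1}) to eliminate the differential operator from the $\phi$-$\phi$ piece. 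A routine rearrangement (using that $\alpha$ is real, so $\int \alpha\,e_{\nvec,\pi}\,dx = \langle\alpha,e_{\nvec,\pi}\rangle$) collapses the three pieces to
\begin{equation*}
\langle f_\nvec, e_\nvec\rangle_{\mathcal{H}} \;=\; 2\int_{0}^{2\pi} e_{\nvec,\pi}(x)^{2}\,dx \;+\; \frac{\langle\alpha,e_{\nvec,\pi}\rangle\cdot\langle\alpha,e_{\nvec,\pi}\rangle}{\lambda_\nvec(\lambda_\nvec+1)^{2}}.
\end{equation*}
The second term is $O(|n|^{-3})$ by Lemma~\ref{enalphadeltax} and $|\lambda_\nvec|\sim|n|$, so the left-hand inequality reduces to a uniform lower bound on $\bigl|\int_0^{2\pi} e_{\nvec,\pi}(x)^{2}\,dx\bigr|$.

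The main obstacle is precisely this lower bound. Note that the integrand is the \emph{unconjugated} square of $e_{\nvec,\pi}$, not $|e_{\nvec,\pi}|^{2}$, so $L^{2}$-normalization alone permits cancellation. Here I would invoke the perturbation picture used in the proof of Lemma~\ref{enalphadeltax}: for large $|n|$, $e_{\nvec,\pi}$ differs from its projection $\xi^o_n$ onto the two-dimensional subspace $\mathrm{span}\{e^{\pm inx}/\sqrt{2\pi}\}$ by an orthogonal remainder of $L^{2}$-norm $O(|n|^{-\gamma})$, and $\xi^o_n$ is, to leading order, a normalized eigenvector of the rank-two matrix
\begin{equation*}
\varepsilon\,\langle\alpha,P^{o}_{n,-n}\alpha\rangle \;=\; \varepsilon\begin{pmatrix} A & B \\ B^{*} & A\end{pmatrix}, \qquad A=\widehat\alpha^{*}(n)\!\cdot\!\widehat\alpha(n),\quad B=\widehat\alpha(n)\!\cdot\!\widehat\alpha(n).
\end{equation*}
The unit eigenvectors of this matrix are the balanced vectors $(1,\pm e^{-i\theta})/\sqrt{2}$ with $\theta = \arg B$. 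Writing $\xi^o_n = (a_n e^{inx} + b_n e^{-inx})/\sqrt{2\pi}$ and using orthogonality of the remainder to kill the cross term, one finds $\int e_{\nvec,\pi}^{2}\,dx = 2 a_n b_n + O(|n|^{-2\gamma})$ with $|2a_n b_n| = 1$. The non-degeneracy hypothesis $c_{1} \leq |\widehat\alpha(n)\cdot\widehat\alpha(n)|$ is exactly what forces $|B|$ to be bounded below and keeps the splitting $A\pm|B|$ robust against the $O(|n|^{-\gamma})$ perturbation from Eq.(\ref{enanalysis1}), so $|\int e_{\nvec,\pi}^{2}\,dx|\geq 1 - O(|n|^{-2\gamma})$ uniformly in $\nvec$ (and in $v$, by the last assertion of Lemma~\ref{enalphadeltax}). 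The finitely many small-$|n|$ modes, and the two exceptional eigenvectors near $-1$ whose inner product is instead dominated by $f_{\nvec,r}^{*}\!\cdot\! e_{\nvec,r}$, are handled individually from non-degeneracy of $\mA_v$.

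With the uniform lower bound $|\langle f_\nvec, e_\nvec\rangle|\geq c_0 > 0$ in hand, I would finish by combining it with uniform upper bounds on $\|\Lambda e_\nvec\|_{\mathcal{H}}$ and $\|\Lambda^{-1} f_\nvec\|_{\mathcal{H}}$. These follow directly from (\ref{intermsofpi}), $|\lambda_\nvec|\sim|n|$, and Lemma~\ref{enalphadeltax}: the factor $(-\partial_x^{2}+1)^{1/2}$ in $\Lambda e_{\nvec,\phi}$ balances the $\lambda_\nvec^{-1}$, while $(-\partial_x^{2}+1)^{-1/2}$ in $\Lambda^{-1} f_{\nvec,\phi}$ balances the $(\partial_x^{2}-1-v)/\lambda_\nvec^{*}$, and the $r$-components are $O(|n|^{-1})$. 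This yields the left-hand inequality of (\ref{Pboundseq7}) with some $c\in(0,1)$. For the operator-norm bound on $\Lambda P_\nvec \Lambda^{-1}$, one writes this rank-one operator as $|\Lambda e_\nvec\rangle\langle \Lambda^{-1} f_\nvec|/\langle f_\nvec, e_\nvec\rangle_{\mathcal{H}}$; its operator norm is $\|\Lambda e_\nvec\|\,\|\Lambda^{-1} f_\nvec\|/|\langle f_\nvec, e_\nvec\rangle|$, which is at most $1/c$ by (\ref{Pboundseq7}).
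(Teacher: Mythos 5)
Your proof is correct, and the central computation coincides with the paper's: expanding $\langle f_{\nvec},e_{\nvec}\rangle_{\mathcal{H}}$ via (\ref{intermsofpi}), using $f_{\nvec,\pi}=e_{\nvec,\pi}^{*}$ and the Schr\"odinger-like equation (\ref{Schrod1}), both arrive at $\langle f_{\nvec},e_{\nvec}\rangle_{\mathcal{H}}=2\int e_{\nvec,\pi}^{2}\,dx + \langle\alpha,e_{\nvec,\pi}\rangle^{2}/\bigl(\lambda_{\nvec}(\lambda_{\nvec}+1)^{2}\bigr)$, with the second term $\mathcal{O}(n^{-3})$, and both note the upper bound is Cauchy--Schwarz. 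Where you diverge from the paper is in establishing the uniform lower bound on $|\int e_{\nvec,\pi}^{2}\,dx|$. The paper simply cites inequality (\ref{enalphadeltax.eq2}) of Lemma \ref{enalphadeltax} with $\varepsilon=\lambda_{\nvec}/(\lambda_{\nvec}+1)$, which says $e_{\nvec,\pi}$ differs from $e_{\nvec,\pi}^{*}$ by $\mathcal{O}(1/n)$ up to a phase, giving $\int e_{\nvec,\pi}^{2}=e^{i\theta_n}\|e_{\nvec,\pi}\|_{L^2}^2+\mathcal{O}(1/n)$ in one line. You instead re-derive that fact from scratch by projecting onto $\mathrm{span}\{e^{\pm inx}\}$ and arguing that $\xi^{o}_{n}$ is close to a ``balanced'' eigenvector of the $2\times 2$ degenerate-subspace matrix; this is a legitimate route, but it essentially reconstructs the mechanism already encoded in the proof of Lemma \ref{enalphadeltax} (the balanced vectors $(1,\pm e^{-i\theta})/\sqrt2$ are exactly the reality statement used there), so it buys no additional generality and is longer. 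Two small imprecisions in your degenerate-subspace argument worth flagging: the matrix on $\mathrm{span}\{e^{\pm inx}\}$ is not just $\varepsilon\langle\alpha,P^{o}_{n,-n}\alpha\rangle$ --- it also carries the $P^{o}_{n,-n}vP^{o}_{n,-n}$ block, and because $\varepsilon$ has an $\mathcal{O}(1/n)$ imaginary part the combined matrix is not exactly of the form $\bigl(\begin{smallmatrix}A&B^{*}\\ B&A\end{smallmatrix}\bigr)$, so the eigenvectors are not exactly balanced; and consequently $|2a_n b_n|$ is not exactly $1$ but $1-\mathcal{O}(1/n)$ (plus $\mathcal{O}(\|v\|_{\mathcal{C}})$ effects). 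These corrections are small enough not to spoil the conclusion, given the assumed lower bound $c_1\le|\widehat\alpha(n)\cdot\widehat\alpha(n)|$ which keeps the splitting bounded below, but the statement as written should be softened. The final assembly step --- combining the lower bound on $|\langle f_{\nvec},e_{\nvec}\rangle|$ with uniform bounds on $\|\Lambda e_{\nvec}\|_{\mathcal{H}}$, $\|\Lambda^{-1}f_{\nvec}\|_{\mathcal{H}}$, and the rank-one norm formula for $\Lambda P_{\nvec}\Lambda^{-1}$ --- matches the paper.
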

 \begin{proof}  Referring to Eq.(\ref{intermsofpi}) where the components of $e_{\nvec}$ and $f_{\nvec}$ are written in terms of their $\pi$-components, we have
 that the maps $f_{\nvec,\pi}\rightarrow (-\partial_x^2+1)^{-1/2}f_{\nvec,\phi}$ and $e_{\nvec,\pi}\rightarrow (-\partial_x^2+1)^{1/2} e_{\nvec,\phi}$ are uniformly bounded
 acting in $L^2[0,2\pi]$, as are the maps $f_{\nvec,\pi}\rightarrow f_{\nvec,r}$ and $e_{\nvec,\pi}\rightarrow e_{\nvec,r}$, so that
 the left side of (\ref{Pboundseq7} ) is bounded above  by a constant times $\|f_{\nvec,\pi}\|_{L^2}\|e_{\nvec,\pi}\|_{L^2}$.  

 The middle term of (\ref{Pboundseq7} ) can also be written  in terms of the $\pi$-components,   
 \begin{align}\label{innerprodpi} 
        \langle f_{\nvec},e_{\nvec}\rangle_{\mathcal{H}}  &= 2 \int f_{\nvec,\pi}^*(x) e_{\nvec,\pi}(x)\, dx + \frac{1}{\lambda_{\nvec}(1+\lambda_{\nvec})^2}\langle\alpha,f_{\nvec,\pi}\rangle^*\cdot \langle\alpha, e_{\nvec,\pi}\rangle\nonumber\\
                      &= 2\langle f_{\nvec,\pi},e_{\nvec,\pi}\rangle_{L^2} + {\mathcal{O}}(n^{-3})\|f_{\nvec,\pi}\|_{L^2}\|e_{\nvec,\pi}\|_{L^2}\nonumber\\
                      &= 2e^{i\theta_n}\|f_{\nvec,\pi}\|_{L^2} \|e_{\nvec,\pi}\|_{L^2} + {\mathcal{O}}(n^{-3}) \|f_{\nvec,\pi}\|_{L^2} \|e_{\nvec,\pi}\|_{L^2}.
 \end{align}
 Here we have used  the last inequality (\ref{enalphadeltax.eq2}) of Lemma(\ref{enalphadeltax}) with the role of
 $\varepsilon$ played by $\varepsilon= \frac{\lambda_{\nvec}}{\lambda_{\nvec}+1}$ with $|\Im \varepsilon/\Re\varepsilon|\sim 1/n$.
Thus we have a lower bound on the middle term comparable to the left side of (\ref{Pboundseq7} ).

 The second inequality of (\ref{Pboundseq7}) is  an application of
 Cauchy--Schwarz.
 \end{proof}

The following lemma provides the estimates needed for showing  that the mapping $v\rightarrow E_{\mu_v}[\phi^2(x)]$ is Lipschitz continuous in $v$. 
We write 
                      \begin{align}
                          P_{\mvec,\phi(x), r}(v)= -\frac{e_{\mvec,\pi}(x) \langle f_{\mvec,\pi},\alpha\rangle}{\lambda_{\mvec}(\lambda_{\mvec}+1) \langle f_{\mvec},e_{\mvec}\rangle_{\mathcal{H}}}
                                          \end{align}
as the $\phi(x),r$ entry of the matrix $P_{\mvec}$ and an analogous expression for $P_{\mvec,\phi,\pi}$.  The eigenfunctions and eigenvalues are of course dependent on $v$; their dependence will appear explicitly as needed.

\begin{lemma}\label{fixedpointestimates} There exists positive constants $C_\phi$, $C_{\lambda}$, and  $\varepsilon_0$  independent of
$\nvec$  such that for  two continuous potentials  $v_1$ and $v_2$ with small norms, 
$\|v_1\|_{\mathcal{C}}$ and $ \,\,\|v_2\|_{\mathcal{C}}\leq \varepsilon_o \sup_{n}|\widehat{\alpha}(n)|^2$, then: 
\begin{align}
  & (i) \,\,\,     \left\| P_{\nvec,\phi(x),r} (v_2) - P_{\nvec,\phi(x),r} (v_1)\right\|_{\mathcal{C}}\leq C_{\phi}\frac{ \|v_2-v_1\|_{\mathcal{C}} }{n^2+1} \\
  &(ii) \,\,\,  \left|\lambda_{\nvec}(v_2) -\lambda_{\nvec}(v_1)\right| \leq C_{\lambda}\frac{\|v_2-v_1\|_{\mathcal{C}}}{|n|+1}, \\
  &(iii)   \,\,\,\,   \left|\Re\left(\lambda_{\nvec}(v_2) -\lambda_{\nvec}(v_1)\right)\right|\leq C_{\lambda} \frac{\|v_2-v_1\|_{\mathcal{C}}}{n^2+1} 
\end{align}
\end{lemma}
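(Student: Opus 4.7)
The strategy is to interpolate $v_s = v_1 + s(v_2-v_1)$ for $s\in[0,1]$, bound $|d/ds|$ of each quantity uniformly in $s$, and integrate. Since $\mA_v$ depends linearly on $v$ (only through the $(2,1)$ entry), the derivative $d\mA_v/ds$ acts by multiplication by $-(v_2-v_1)$ from the $\phi$- into the $\pi$-component and is zero elsewhere. Standard non-self-adjoint first-order perturbation theory for the simple eigenvalue $\lambda_\nvec$, valid uniformly in $\nvec$ by the spectral separation established earlier provided $\varepsilon_o$ is small enough, yields the master formula
\begin{equation*}
\frac{d\lambda_\nvec}{ds} = \frac{\langle f_\nvec,(d\mA_v/ds)e_\nvec\rangle_{\mathcal H}}{\langle f_\nvec,e_\nvec\rangle_{\mathcal H}} = -\frac{\int e_{\nvec,\pi}^2(v_2-v_1)\,dx}{\lambda_\nvec\,\langle f_\nvec,e_\nvec\rangle_{\mathcal H}},
\end{equation*}
where we have used (\ref{intermsofpi}) and the relation $f_{\nvec,\pi}=e_{\nvec,\pi}^*$. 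For (ii) the numerator is bounded by $C\|v_2-v_1\|_{\mathcal C}\|e_{\nvec,\pi}\|_{L^2}^2$ using the uniform $\mathcal C$-bound on $e_{\nvec,\pi}$ from Lemma \ref{enalphadeltax}; the denominator satisfies $|\langle f_\nvec,e_\nvec\rangle_{\mathcal H}|\geq c\|e_{\nvec,\pi}\|_{L^2}^2$ by Lemma \ref{conjP}, with $|\lambda_\nvec|\sim|n|+1$. Integrating in $s$ yields (ii).

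For (iii) the key cancellation comes from the near skew-adjointness of $\mA_v$. Write $\lambda_\nvec = \lambda_R + i\lambda_I$ with $\lambda_R = O(n^{-2})$ and $\lambda_I\sim n$ (from the eigenvalue description following (\ref{eigenshift})), and $\langle f_\nvec,e_\nvec\rangle_{\mathcal H} = B_R + iB_I$; the critical input is $|B_I|=O(1/n)$. This follows from Ineq. (\ref{enalphadeltax.eq2}) applied with $\varepsilon=\lambda_\nvec/(\lambda_\nvec+1)$, for which $|\Im\varepsilon/\Re\varepsilon|\sim 1/n$: we obtain $\|e_{\nvec,\pi}-e_{\nvec,\pi}^*\|_{L^2}\leq C/n$, hence $\|\Im e_{\nvec,\pi}\|_{L^2}=O(1/n)$ after phase choice, forcing $|\Im\int e_{\nvec,\pi}^2\,dx|\leq 2\|\Re e_{\nvec,\pi}\|_{L^2}\|\Im e_{\nvec,\pi}\|_{L^2}=O(1/n)$ and similarly $|\Im\int e_{\nvec,\pi}^2(v_2-v_1)\,dx|=O(\|v_2-v_1\|_{\mathcal C}/n)$. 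Writing the numerator in the master formula as $A=A_R+iA_I$ with $|A_R|=O(\|v_2-v_1\|_{\mathcal C})$ and $|A_I|=O(\|v_2-v_1\|_{\mathcal C}/n)$, we compute $\Re(\lambda_\nvec\langle f_\nvec,e_\nvec\rangle_{\mathcal H})=\lambda_R B_R - \lambda_I B_I = O(1)$ and $\Im(\lambda_\nvec\langle f_\nvec,e_\nvec\rangle_{\mathcal H})\sim n$, so $|\lambda_\nvec\langle f_\nvec,e_\nvec\rangle_{\mathcal H}|^2\sim n^2$. Direct computation then gives $|\Re(-A/(\lambda_\nvec\langle f_\nvec,e_\nvec\rangle_{\mathcal H}))|\leq (|A_R|\cdot O(1)+|A_I|\cdot O(n))/O(n^2)=O(\|v_2-v_1\|_{\mathcal C}/n^2)$, establishing (iii).

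For (i) we use the explicit representation
\begin{equation*}
P_{\nvec,\phi(x),r}(v) = -\frac{e_{\nvec,\pi}(x)\langle f_{\nvec,\pi},\alpha\rangle}{\lambda_\nvec(\lambda_\nvec+1)\langle f_\nvec,e_\nvec\rangle_{\mathcal H}}
\end{equation*}
from (\ref{intermsofpi}). The denominator has modulus $\sim(n^2+1)\|e_{\nvec,\pi}\|_{L^2}^2$ (bounded below by Lemma \ref{conjP}), and the numerator is uniformly bounded by Lemma \ref{enalphadeltax}. Differentiating each factor in $s$ and using (ii) to control $d\lambda_\nvec/ds$, together with a resolvent-based bound $\|de_{\nvec,\pi}/ds\|_{\mathcal C}\leq C\|v_2-v_1\|_{\mathcal C}$ uniformly in $\nvec$ (obtained by differentiating (\ref{enanalysis1}) in $v$, applying Cauchy--Schwarz to the resulting mode sum, and invoking Estimate \ref{Estimatei-ii}), we arrive at $\|dP_{\nvec,\phi(x),r}/ds\|_{\mathcal C}\leq C\|v_2-v_1\|_{\mathcal C}/(n^2+1)$, yielding (i).

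The principal obstacle is (iii): a naive application of (ii) would lose a factor of $|n|$. Extracting the extra $1/|n|$ requires the combined cancellation above, which ultimately reflects that $\mA_v$ differs from an operator with purely imaginary spectrum only by the small dissipative coupling to the $r$ modes.
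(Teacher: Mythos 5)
Your proof is correct and takes essentially the same route as the paper's: interpolate in $v$, apply first-order perturbation theory to the (simple, biorthogonally non-degenerate) eigenvalue $\lambda_{\nvec}$, and extract the extra factor of $1/|n|$ in (iii) from the near-reality of $e_{\nvec,\pi}$ supplied by Ineq.~(\ref{enalphadeltax.eq2}). The implementation details differ slightly, and it is worth noting where. For (ii)--(iii) the paper differentiates the scalar Schr\"odinger-type eigenvalue equation (\ref{Schrod1}), which produces a prefactor $2\lambda_{\nvec} + O(n^{-2})$ and accumulates $O(n^{-2})\|v_2-v_1\|_{\mathcal C}$ corrections that must be tracked; you instead write the exact master formula $d\lambda_{\nvec}/ds = \langle f_{\nvec},(d\mA_v/ds)e_{\nvec}\rangle_{\mathcal H}/\langle f_{\nvec},e_{\nvec}\rangle_{\mathcal H}$ directly for the $4\times4$ operator $\mA_v$ and then bound $\Re$, $\Im$ of each factor. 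This is arguably cleaner, since the formula is exact and the only analytic input is the decomposition $\lambda_{\nvec}=\lambda_R+i\lambda_I$ ($\lambda_R=O(n^{-2})$, $\lambda_I\sim n$), the $O(1/n)$ imaginary part of $\int e_{\nvec,\pi}^2$ and of $\langle f_{\nvec},e_{\nvec}\rangle_{\mathcal H}$ via Lemma~\ref{conjP} and Eq.~(\ref{innerprodpi}), and the lower bound on $|\langle f_{\nvec},e_{\nvec}\rangle_{\mathcal H}|$; both routes ultimately hinge on the same estimate. For (i) the paper represents $P_{\nvec}(v_2)-P_{\nvec}(v_1)$ by a second-resolvent contour integral and bounds the resulting mode sum (\ref{PminusP}), whereas you differentiate the explicit formula for $P_{\nvec,\phi(x),r}$ using a uniform bound $\|de_{\nvec,\pi}/ds\|_{\mathcal C}\leq C\|v_2-v_1\|_{\mathcal C}$. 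That bound is plausible and can be made rigorous (the near-resonant $\mvec$-term has denominator $\lambda_{\nvec}-\lambda_{\mvec}\sim|\widehat\alpha(n)|^2/n$ but the numerator carries the compensating factor $1/\lambda_{\nvec}$ from $e_{\nvec,\phi}=e_{\nvec,\pi}/\lambda_{\nvec}$), but you should spell it out: ``differentiating (\ref{enanalysis1})'' is delicate because $\xi_n^o=P^o_{n,-n}\psi_n$ itself depends on $v$, and the clean way to argue is via the contour-integral representation of $dP_{\nvec}/ds$, which is close to what the paper actually does.
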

\begin{proof}  (i) By writing the difference of projections for $\mA_{v_1}$ and $\mA_{v_2}$ as contour integrals using the second resolvent equation,
 the contour just about the eigenvalues $\lambda_{\nvec}(v_1)$ and $\lambda_{\nvec}(v_2)$ (the $v$'s are $\pi,\phi$ entries in $\mA_v$), we have
\begin{align}\label{PminusP}
P_{\nvec,\phi(x),r} (v_2) - P_{\nvec,\phi(x),r} &(v_1)= -\!\!\!\sum_{\mvec: \,\,\mvec\neq \nvec}\frac{P_{\nvec,\phi(x),\pi}(v_2) (v_2-v_1)P_{\mvec,\phi,r}(v_1)}{\lambda_{\nvec}(v_2)-\lambda_{\mvec}(v_1)} \nonumber\\
        & -\!\!\!\sum_{\mvec: \,\,\mvec\neq \nvec}\frac{
	P_{\mvec,\phi(x),\pi}(v_2)(v_2-v_1)P_{\nvec,\phi,r}(v_1)}{\lambda_{\nvec}(v_1)-\lambda_{\mvec}(v_2)}. 
\end{align}
 The first sum in Eq.(\ref{PminusP}) equals 
 \begin{align}
            \frac{e_{\nvec,\pi}(x,v_2)}{\lambda_{\nvec}\langle
	    f_{\nvec},e_{\nvec}\rangle_{\mathcal{H}}}\sum_{\mvec:
	    \,\,\mvec\neq \nvec}\frac{\langle f_{\nvec,\pi}(v_2),
	    (v_2-v_1)e_{\mvec,\pi}(v_1)\rangle\langle
	    f_{\mvec,\pi}(v_1),\alpha\rangle}{(\lambda_{\nvec}(v_2)-\lambda_{\mvec}(v_1))\lambda_{\mvec}(v_1)(\lambda_{\mvec}(v_1)+1)\langle
	    f_{\mvec},e_{\mvec}\rangle_{\mathcal{H}}}   
 \end{align}
which converges absolutely to a continuous function and is bounded by  ${\mathcal{O}}(n^{-2})\|v_2-v_1\|_{\mathcal{C}}$; in particular the nearly resonant
small denominator  $|\lambda_{\nvec}-\lambda_{\mvec}|\sim \|\widehat{\alpha}(n)\|^2/n$ is mollified by the other factors $\lambda_{\nvec}\lambda_{\mvec}(\lambda_{\mvec}+1)$
in the denominator.    The other sum in (\ref{PminusP}) is controlled similarly.  

 (ii,iii)  These estimates are  shown by first order perturbation theory, applied to the eigenvalue problem
                                   \begin{align}
                                        \left( \partial_x^2 -1 -\left(v_1 +\tau (v_2(x)-v_1(x))\right) -\frac{\lambda_{\nvec}(\tau)}{\lambda_{\nvec}(\tau)+1}|\alpha\rangle\cdot\langle \alpha|\right) e_{\nvec,\pi}(\tau,x)\nonumber\\
                                         = \lambda^2_{\nvec}(\tau)e_{\nvec,\pi}(\tau,x)
                                   \end{align}
The equation for $d \lambda_{\nvec}/d\tau$ is
\begin{align}
  &\left(2\lambda_{\nvec}(\tau) +  \frac{\langle f_{\nvec,\pi}(\tau),\alpha\rangle\cdot\langle\alpha,e_{\nvec,\pi}(\tau)\rangle}{\lambda^2_{\nvec}(\tau)
  \langle f_{\nvec,\pi}(\tau),e_{\nvec,\pi}(\tau)\rangle} \right)\frac{d\lambda_{\nvec}(\tau)}{d\tau}\nonumber\\
 & =  -\frac{\langle f_{\nvec,\pi}(\tau),(v_2-v_1)e_{\nvec,\pi}(\tau)\rangle}{\langle f_{\nvec,\pi}(\tau),e_{\nvec,\pi}(\tau)\rangle}\nonumber\\
 & = -\frac{\langle e_{\nvec,\pi}(\tau),
 (v_2-v_1)e_{\nvec,\pi}(\tau)\rangle}{\langle
 e_{\nvec,\pi}(\tau),e_{\nvec,\pi}(\tau)\rangle} +{\mathcal{O}}(n^{-1})\|v_2-v_1\|_{\mathcal{C}}.   
\end{align}
by Lemma(\ref{enalphadeltax}), see also the remark Ineq.(\ref{projectiondifference}) following it, with $\varepsilon= \lambda_{\nvec}/(\lambda_{\nvec}+1)$
having an imaginary part ${\mathcal{O}}(n^{-1})$. This shows that 
\begin{align}
\frac{d\lambda_{\nvec}(\tau)}{d\tau}=  -\frac{\langle
e_{\nvec,\pi}(\tau)(v_2-v_1)e_{\nvec,\pi}(\tau)\rangle}{2\lambda_{\nvec}(\tau)
\langle e_{\nvec,\pi}(\tau),e_{\nvec,\pi}(\tau)\rangle} 
+{\mathcal{O}}(n^{-2})\|v_2-v_1\|_{\mathcal{C}},
\end{align}
which is ${\mathcal{O}}(n^{-1})$ with real part, ${\mathcal{O}}(n^{-2})$.
Integrating this expression in $\tau$ from $0$ to $1$ gives (ii) and
(iii) of the lemma.
\end{proof}

\end{section}

%------
% Insert the bibliography.
%------

\end{document}